\documentclass[fleqn,11pt,twoside]{article}

\usepackage{amsthm,amsthm,amssymb, color, xcolor,epsfig, graphics, subfigure}

\usepackage{amsmath, graphicx, latexsym, lscape}

\usepackage{tikz}

\makeatletter
\newcommand{\copyrightnote}[2]{{\renewcommand{\thefootnote}{}
 \footnotetext{\small\it
\begin{flushleft}
 \copyright \ #1   #2
\end{flushleft}}}}

\newcommand{\Name}[1]{\begin{flushleft}
                       \LARGE \bf #1
                       \end{flushleft}\vspace{-3mm}}

\newcommand{\Author}[1]{\begin{flushleft}
                       \it #1 \end{flushleft}}

\newcommand{\Address}[1]{\begin{flushleft}
                       \it #1 \end{flushleft}}

\newcommand{\Date}[1]{\begin{flushleft}
                      \small  \it #1 \end{flushleft}}

\newcommand{\evenhead}{Author \ name}
\newcommand{\oddhead}{Article \ name}

\renewcommand{\@evenhead}{
\hspace*{-3pt}\raisebox{-15pt}[\headheight][0pt]{\vbox{\hbox to \textwidth
{\thepage \hfil \evenhead}\vskip4pt \hrule}}}
\renewcommand{\@oddhead}{
\hspace*{-3pt}\raisebox{-15pt}[\headheight][0pt]{\vbox{\hbox to \textwidth
{\oddhead \hfil \thepage}\vskip4pt\hrule}}}
\renewcommand{\@evenfoot}{}
\renewcommand{\@oddfoot}{}

\long\def\@makecaption#1#2{%
  \vskip\abovecaptionskip
  \sbox\@tempboxa{\small \textbf{#1.}\ \ #2}%
  \ifdim \wd\@tempboxa >\hsize
    {\small \textbf{#1.}\ \ #2}\par
  \else
    \global \@minipagefalse
    \hb@xt@\hsize{\hfil\box\@tempboxa\hfil}%
  \fi
  \vskip\belowcaptionskip}

\newcommand{\JNMPnumberwithin}[3][\arabic]{%
  \@ifundefined{c@#2}{\@nocounterr{#2}}{%
    \@ifundefined{c@#3}{\@nocnterr{#3}}{%
      \@addtoreset{#2}{#3}%
      \@xp\xdef\csname the#2\endcsname{%
        \@xp\@nx\csname the#3\endcsname .\@nx#1{#2}}}}%
}

\renewenvironment{proof}[1][\proofname]{\par
  \normalfont
  \topsep6\p@\@plus6\p@ \trivlist
  \item[\hskip\labelsep\textbf{%
    #1\@addpunct{.}}]\ignorespaces
}{%
  \qed\endtrivlist
}

\newcommand{\resetfootnoterule} {
  \renewcommand\footnoterule{%
  \kern-3\p@
  \hrule\@width.4\columnwidth
  \kern2.6\p@}
}

\renewcommand{\footnoterule}{}

\makeatother

\newtheorem{lem}{Lemma}[section]
\newtheorem{definition}{Definition}[section]
\newtheorem{theorem}{Theorem}[section]

\newtheorem{property}{Property}[section]

\allowdisplaybreaks

\def \ccirc{{\hbox{\small{$\circ$}}}}
\def \t#1{\widetilde{#1}}

\begin{document}

\renewcommand{\evenhead}{ {\LARGE\textcolor{blue!10!black!40!green}{{\sf \ \ \ ]ocnmp[}}}\strut\hfill
Da-jun Zhang
}
\renewcommand{\oddhead}{ {\LARGE\textcolor{blue!10!black!40!green}{{\sf ]ocnmp[}}}\ \ \ \ \
Integrability and transformations in the bilinear method
}

\thispagestyle{empty}
\newcommand{\FistPageHead}[3]{
\begin{flushleft}
\raisebox{8mm}[0pt][0pt]
{\footnotesize \sf
\parbox{150mm}{{\textcolor{blue!10!black!40!green}{{\bf Open Communications in Nonlinear Mathematical Physics}}}
\ \ {Special Issue: Hietarinta}, 2026\\[0.1cm]
\strut\hfill
ocnmp:18521
pp #2\hfill {\sc #3}}}\vspace{-13mm}
\end{flushleft}}

\FistPageHead{1}{\pageref{firstpage}--\pageref{lastpage}}{ \ \ }

\strut\hfill

\strut\hfill

\copyrightnote{The authors. Distributed under a Creative Commons Attribution 4.0 International License}

\begin{center}

{\bf {\large A Special OCNMP Issue in Honour of Jarmo Hietarinta}}\\[0.2cm]
{\bf {\large on the Occasion of his 80th Birthday}}
\end{center}

\smallskip

\Name{Integrability and Transformations in the Bilinear Method: An Introduction}

\Author{Da-jun Zhang\footnote{
Email: djzhang@staff.shu.edu.cn}}

\Address{Department  of Mathematics, Shanghai University, Shanghai 200444, China }

\Date{Received June 16, 2026; Accepted July 21, 2026}

\setcounter{equation}{0}

\smallskip

\noindent
{\bf Citation format for this Article:}\newline
Da-jun Zhang,
Integrability and transformations in the bilinear method: An introduction,
{\it Open Commun. Nonlinear Math. Phys.}, Special Issue:\,Hietarinta, ocnmp:18521, \pageref{firstpage}--\pageref{lastpage}, 2026.

\strut\hfill

\noindent
{\bf The permanent Digital Object Identifier (DOI) for this Article:}\newline
{\it 10.46298/ocnmp.18521}
\strut\hfill

\begin{abstract}

\noindent
This is a partial review of  the bilinear method,
focusing on the integrability based on the 3-soliton-solution condition
and the transformations between $\tau$ functions.
\textit{Dedicated to Jarmo Hietarinta's 80th birthday}.

\end{abstract}

\label{firstpage}


\section{Introduction}\label{sec-1}

In 1971, Ryogo Hirota invented the celebrated bilinear method \cite{Hir-1971}.
According his recollection in \cite{Hir-2004} and the report by Daisuke Takahashi \cite{Tak-2016},
he was motivated by the research of Morikazu Toda
who found a 2-soliton solution (2SS) for the Toda lattice in 1968.
He tried to construct a 3-soliton solution (3SS) for this equation but he failed.
The breakthrough came up in April of 1971: He got a 3SS for the sine-Gordon equation
using the superposition formula (known as the Bianchi identity) of its B\"acklund transformation.
Later, he transformed the nonlinear self-dual network equation into a bilinear form
and from which he obtained a 3SS.
Then, he successively obtained the bilinear forms and 3SSs for the Toda lattice,
the Korteweg-de Vries (KdV) equation and the sine-Gordon equation.
In his celebrated paper \cite{Hir-1971} in 1971, the $N$-soliton solution (NSS)
of the bilinear  KdV equation is presented in the determinant of a generalized Cauchy matrix.
Soon after, this form was re-derived by Wadati and Toda \cite{WadT-1972}
from the  Inverse Scattering Transform \cite{GGKM-1967}.
Without the time dependence, this form has been obtained by Kay and Moses \cite{KayM-1956} in 1956,
from the  solution of the Gel'fand-Levitan-Marchenko (GLM) equation \cite{GelL-1951,Mar-1955} in the reflectionless case.
Hirota's solution is nowadays known as the $\tau$ function of the KdV equation,
while, as  a notion, $\tau$ function has played a more general and more important role in the theory of integrable systems, far beyond the solutions of bilinear equations.

In 1974, Hirota introduced the bilinear operator ``$D$'' in \cite{Hir-1974}
and started to study bilinear B\"acklund transformations (BTs).
In the same paper he also explored the relation between  bilinear BTs
and Lax pairs, which connects the BTs and integrability from the bilinear perspective.
In 1980, Hirota reported the fact that any KdV-type bilinear equation always has a 1SS and a 2SS,
no matter whether it is integrable or not \cite{Hir-1980}.
It is natural to realize that having a 3SS should mean ``something'' for a KdV-type bilinear equation.
Hirota came up with a question in \cite{Hir-1980}: Under what conditions a bilinear equation may have a NSS
in certain form (see Sec.\ref{sec-3-2}), which is called the bilinear equation being integrable in Hirota's sense.
In 1987, Jarmo Hietarinta searched for possible KdV-type bilinear equations that admit 3SSs \cite{Hie-1987a}.
Later, he extended the search to the bilinear equations of the modified KdV (mKdV) type and
sine-Gordon type under the conditions of having 3SSs \cite{Hie-1987b,Hie-1987c},
and to the complex bilinear equations (e.g. the nonlinear Schr\"odinger (NLS) type
and the Benjamin-Ono (BO) type) \cite{Hie-1988}, for which the existence of a 2SS may imply integrability.
Hietarinta summarized his search in \cite{Hie-1990}.
In this paper, we will employ the KdV-type bilinear equations to try  understanding the relations between
having a 3SS and being integrable.

The bilinear method drew the attention of Mikio Sato, who had seminars with  Hirota, Satsuma,
and together with Jimbo and Miwa  \cite{Kod-2023}.
This leads to Sato's observation on the connection between the bilinear Kadomtsev-Petviashvili (KP) equation
and the Pl\"ucker relations on the Grassmannian \cite{Sato-RIMS-1981}.
Soon after, Sato's KP theory was developed by the Kyoto group to become a universal theory for integrable systems.
In this theory, vertex operator plays a remarkable role, which reveals the algebraic structures of
$\tau$ functions as well as integrable equations.
The name ``vertex operator'' comes from its geometric interpretation in string theory:
it is an operator inserted at a point (a vertex) in spacetime where a particle is created or annihilated.
In 1978,  Lepowsky and Wilson \cite{LepW-1978} presented a vertex operator
that is isomorphic to the affine Lie algebra $A^{(1)}_1$.
In 1981, Date, Kashiwara and Miwa \cite{DatKM-1981}
deformed the operator and connected it with the $\tau$ function of the KdV equation,
which led to a series of beautiful work on transformation groups and integrable systems \cite{MiwJD-2000}.
In this paper, we will illustrate how vertex operators play roles in generating $\tau$ functions
for the KdV and KP hierarchies.

This paper presents a partial review of  the bilinear method,
focusing on the integrability based on the 3SS condition
and the transformation between $\tau$ functions by the BTs and  vertex operators.
We mainly referred to two monographs: one is \cite{Hir-book-2004} by Hirota,
and the other is \cite{MiwJD-2000} by Miwa, Jimbo and Date.
We will also include some recent progress of the bilinear method.

The paper is organized as follows.
In Section \ref{sec-2} we introduce basic notations, notions and properties of Hirota's bilinear derivatives.
Then we employ the KdV equation and KP equation as examples to show the procedure of  deriving
of multi-soliton solutions  from their  bilinear forms.
We also illustrate the asymptotic analysis of interactions of solitons.
In Section \ref{sec-3},
first, we show the existence of the 1SS and 2SS for a general KdV-type bilinear equation.
Then we explain the property of elastic scattering of multi-solitons and its connection with the 3SS condition.
This may help us to understand why the 3SS condition implies integrability for the KdV-type bilinear equations.
In Section \ref{sec-4} we introduce bilinear BTs and their connections with Lax pairs and superposition formulae.
In Section \ref{sec-5} we show how vertex operators  generate $\tau$ functions
for the KdV and KP hierarchies.
Finally, some remarks are given in Section \ref{sec-6}.

\section{Introduction to the bilinear method: The KdV and KP(II)}\label{sec-2}

In this section,  we introduce basic notations, notions and properties of Hirota's bilinear derivatives.
The KdV equation and KP equation will serve as examples to illustrate the procedure of deriving multi-soliton solutions
and the asymptotic analysis of interaction of solitons.

\subsection{Bilinear operator}\label{sec-2-1}

Let us start from Hirota's bilinear operator that he introduced in \cite{Hir-1974} in 1974.

\begin{definition}\label{def-1-1}
For $C^{\infty}$ differential functions $f(x,y)$ and $g(x,y)$ defined on $\mathbb{R}^2$, their \textbf{bilinear derivatives} are defined as:
\begin{equation}
D^m_xD^n_y f(x,y)\cdot g(x,y)=(\partial_x-\partial_{x'})^m (\partial_y-\partial_{y'})^n f(x,y)g(x',y')|_{x'=x,y'=y}.
\label{D}
\end{equation}
where $D$ is called \textbf{Hirota's bilinear operator}. Here  $\partial_x=\frac{\partial}{\partial x}$.
\end{definition}

The definition \eqref{D} can also be introduced as follows \cite{Hir-1977a}:
\begin{equation}
e^{\epsilon D_x+\kappa D_y} f(x,y)\cdot g(x,y)=f(x+\epsilon,y+\kappa) g(x-\epsilon,y-\kappa).
\label{D-exp}
\end{equation}
In fact, expanding both sides at $(\epsilon, \kappa)=(0,0)$ and comparing coefficients of
the power $\epsilon^m \kappa^n$, one can get definition \eqref{D}.
Here are simple examples of bilinear derivatives:
\begin{align*}
& D_x f\cdot g = f_x g-fg_x,\\
& D^2_x f\cdot g = f_{xx} g-2f_xg_x+ fg_{xx},\\
& D^3_x f\cdot g = f_{xxx} g-3f_{xx}g_x+ 3f_xg_{xx}-fg_{xxx},\\
& D_xD_y f\cdot g = f_{xy} g- f_xg_y-f_yg_x+ fg_{xy},\\
& D^m_x f\cdot g = \sum^{m}_{j=0}(-1)^j \left(\begin{smallmatrix} m\\ j \end{smallmatrix}\right)
  (\partial^{m-j}_x f(x)) (\partial^j_x g(x)),
\end{align*}
where $\left(\begin{smallmatrix} m\\ j \end{smallmatrix}\right)=\frac{m!}{(m-j)! j!}$.
From these examples one can see the difference between the bilinear derivatives of $f$ and $g$
and Leibniz's rule of the $m$-th order
derivative  $\partial^m_x(fg)$.

Bilinear derivatives  admit symmetric and bilinear properties:
\begin{align*}
& D^m_x f\cdot g = (-1)^m D^m_x g\cdot f,\\
& (aD^m_x+ b D^n_y)f\cdot g=aD^m_x f\cdot g + b D^n_y f\cdot g,\\
& D^m_x D^n_y (af+bg)\cdot h=a  D^m_x D^n_y  f\cdot h + b  D^m_x D^n_y g\cdot h,
\end{align*}
where $a,b\in \mathbb{C}$, and particularly,
\[ D^m_x f\cdot 1 = \partial^m_x f(x). \]
In addition, for linear exponential functions, e.g. $e^{kx+\omega t}$ $(k,\omega \in \mathbb{C})$,
there is a simple formula for their bilinear derivatives:
\begin{align}
 D^m_x D^n_y e^{\eta_1}\cdot e^{\eta_2} = (k_1-k_2)^m(\omega_1-\omega_2)^ne^{\eta_1+\eta_2},
\end{align}
where
\begin{equation}
\eta_i=k_i x+\omega_i y+\eta^{(0)}_i,~~ k_i,\omega_i,\eta^{(0)}_i\in \mathbb{C}.
\label{eta-i}
\end{equation}

The above definition and properties can be extended into arbitrary dimensions.
Suppose that
$ \mathbf{t}=(t_1, t_2, \cdots, t_s),~ \mathbf{p}=(p_1, p_2, \cdots, p_s), ~\mathbf{q}=(q_1, q_2, \cdots, q_s)$
are vectors in $\mathbb{R}^s$.
Then, denoting $\mathbf{p}\cdot\mathbf{t}=\sum_{i=1}^s p_i t_i$,
and  $D_{\mathbf{t}}=(D_{t_1}, D_{t_2}, \cdots,D_{t_s})$,
one can define
\begin{equation}
e^{\mathbf{p}\cdot D_{\mathbf{t}}}f(\mathbf{t})\cdot g(\mathbf{t})=f(\mathbf{t}+\mathbf{p})g(\mathbf{t}-\mathbf{p}).
\label{D-exp-2}
\end{equation}
Suppose $P(\mathbf{t})$ is a polynomial of $\mathbf{t}$, and introduce $P(D_{\mathbf{t}})$.
For example, if $P(\mathbf{t})=3t_1^2t_2+2 t_2 t_3$,
then $P(D_{\mathbf{t}})=3D_{t_1}^2D_{t_2}+2 D_{t_2}D_{t_3}$.
For a general $P(D_{\mathbf{t}})$, it is not difficult to find
\begin{align}
& P(D_{\mathbf{t}})e^{\mathbf{p}\cdot\mathbf{t}}\cdot e^{\mathbf{q}\cdot\mathbf{t}}
=P(\mathbf{p}-\mathbf{q})e^{(\mathbf{p}+\mathbf{q})\cdot \mathbf{t}},\\
& P(D_{\mathbf{t}})e^{\mathbf{p}\cdot\mathbf{t}}\cdot 1
=P(\mathbf{p})e^{\mathbf{p}\cdot \mathbf{t}}
=P(\partial_{\mathbf{t}})e^{\mathbf{p}\cdot \mathbf{t}},
\end{align}
where $\partial_{\mathbf{t}}=(\partial_{t_1}, \partial_{t_2}, \cdots, \partial_{t_s})$.
Besides, for the linear function $\eta_i$ defined in \eqref{eta-i}, one can prove that
\begin{align}
& D^r_xD^s_y(e^{\eta_1}f(x,y))\cdot (e^{\eta_2}g(x,y))\nonumber\\
=~& e^{\eta_1+\eta_2}(D_x+k_1-k_2)^r(D_y+\omega_1-\omega_2)^s f(x,y)\cdot g(x,y),
\label{id-gauge-gen}
\end{align}
and particularly, when $\eta_1=\eta_2$, one has
\begin{equation}
D^r_xD^s_y(e^{\eta_1}f)\cdot (e^{\eta_1}g)= e^{2\eta_1}D_x^rD_y^s f\cdot g,
\label{id-gauge}
\end{equation}
which is known as the gauge property of bilinear derivatives. A more general case of these properties are
\begin{align}
P(D_{\mathbf{t}}) (e^{\mathbf{p}\cdot \mathbf{t}}f(\mathbf{t}))\cdot (e^{\mathbf{q}\cdot \mathbf{t}}g(\mathbf{t}))
= e^{(\mathbf{p}+\mathbf{q})\cdot \mathbf{t}}
P(D_{\mathbf{t}}+\mathbf{p}-\mathbf{q}) f(\mathbf{t})\cdot g(\mathbf{t}),
\end{align}
and
\begin{equation}
P(D_{\mathbf{t}})(e^{\mathbf{p}\cdot \mathbf{t}}f)\cdot (e^{\mathbf{p}\cdot \mathbf{t}} g)= e^{2\mathbf{p}\cdot \mathbf{t}}P(D_{\mathbf{t}})
f\cdot g\, .
\label{gauge-pro}
\end{equation}

\subsection{N soliton solutions}\label{sec-2-2}

In this part we show how Hirota's bilinear method works in finding NSS.
The KdV equation and KP equation will serve as examples.

\subsubsection{NSS of the KdV equation}\label{sec-2-2-1}

The KdV equation reads
\begin{equation}
u_t+6uu_x+u_{xxx}=0.
\label{kdv-eq}
\end{equation}
Note that its coefficients can be arbitrary constants without changing integrability.
Usually we consider its potential form ($u=w_x$):
\begin{equation}
w_t+3(w_x)^2+w_{xxx}=0.
\label{kdv-eq-poten}
\end{equation}
Under the transformation
\begin{equation}
u=2(\ln f)_{xx}, ~~ \hbox{i.e.}~~ w=2(\ln f)_x,
\label{tran-kdv}
\end{equation}
Eq.\eqref{kdv-eq-poten} can be written as \cite{Hir-1971}
\begin{equation}
f_{xt}f-f_xf_t + f_{xxxx}f- 4f_{xxx}f_x + 3(f_{xx})^2=0,
\label{kdv-bil-1}
\end{equation}
which is
\begin{equation}
(D_xD_t +D^4_x) f\cdot f=0
\label{kdv-bil-D}
\end{equation}
in terms of the bilinear operator $D$ given in \eqref{D}.
The above equation is called the bilinear form of the KdV equation \eqref{kdv-eq}, or the bilinear KdV equation;
the solution $f$ is known as the $\tau$ function of the KdV equation; once we have $f$,
we can get back a solution of the KdV equation through the transformation \eqref{tran-kdv}.

\textit{Note: In 1971 Ryogo Hirota first introduced the bilinear method  to derive NSS of
the KdV equation \cite{Hir-1971}.
At that time the bilinear form of the KdV equation was written as \eqref{kdv-bil-1};
the bilinear operator $D$ as defined in Definition \ref{def-1-1} was  introduced later by Hirota
in\cite{Hir-1974} in 1974.}

\vskip 8pt

It is easy to check
\begin{equation}
f=1+e^{\eta},~~ \eta=kx-k^3 t +\eta^{(0)}, ~~ (k, \eta^{(0)}\in \mathbb{R})
\label{1ss-f}
\end{equation}
satisfies the bilinear KdV equation \eqref{kdv-bil-1}.
To achieve more solutions, one can (perturbatively) expand
\begin{equation}
f=1+\sum^{\infty}_{i=1} f^{(i)} \varepsilon^i,
\label{f-expand}
\end{equation}
where subscript $(i)$ is for numbering coefficients.
Substituting the above into \eqref{kdv-bil-D} and comparing coefficients of each power of $\varepsilon$,
we reach to an equation system
\begin{subequations}\label{f-eqs-kdv}
\begin{align}
& \varepsilon^1: ~~~ (\partial_{xt}+\partial^4_x) f^{(1)}=0,\label{f-eqs-kdv-a}\\
& \varepsilon^2: ~~~ (\partial_{xt}+\partial^4_x) f^{(2)}=-\frac{1}{2}(D_xD_t +D^4_x) f^{(1)}\cdot f^{(1)},\label{f-eqs-kdv-b}\\
& \varepsilon^3: ~~~ (\partial_{xt}+\partial^4_x) f^{(3)}=-(D_xD_t +D^4_x) f^{(1)}\cdot f^{(2)},\label{f-eqs-kdv-c}\\
& \varepsilon^4: ~~~ (\partial_{xt}+\partial^4_x) f^{(4)}=-(D_xD_t +D^4_x) (f^{(1)}\cdot f^{(3)}+\frac{1}{2} f^{(2)}\cdot f^{(2)}),\label{f-eqs-kdv-d}\\
& ~~~ \cdots \cdots .\nonumber
\end{align}
\end{subequations}
For \eqref{f-eqs-kdv-a} we can take $f^{(1)}=e^{\eta_i}$, where
\begin{equation}
\eta_i=k_ix-k_i^3 t +\eta_i^{(0)},~~ k_i, \eta_i^{(0)}\in \mathbb{R}.
\label{eta-i-kdv}
\end{equation}
Since \eqref{f-eqs-kdv-a} is a homogeneous linear equation, for any positive integer $N$,
\begin{equation}
f^{(1)}=\sum^N_{i=1}e^{\eta_i}
\label{f1-kdv}
\end{equation}
gives a solution to \eqref{f-eqs-kdv-a} as well,
where $\{\eta_i\}$ are defined by \eqref{eta-i-kdv}.

Now, let us look at in more details the equation system \eqref{f-eqs-kdv} with $f^{(1)}$ given as in \eqref{f1-kdv}.
When $N=1$, $f^{(1)}=e^{\eta_1}$ satisfies \eqref{f-eqs-kdv-a};
meanwhile, the rest equations in \eqref{f-eqs-kdv} hold when taking $f^{(2)}=f^{(3)}=\cdots =0$.
Thus,
\begin{equation}
f=1+\varepsilon e^{\eta_1}
\label{1ss-f-1}
\end{equation}
provides a solution to the bilinear KdV equation \eqref{kdv-bil-D}.
This means the ``perturbation'' formula \eqref{f-expand} can be truncated as shown in \eqref{1ss-f-1}.
As a consequence,  $\varepsilon$ can be arbitrary in \eqref{1ss-f-1}.
Taking $\varepsilon=1$ in \eqref{1ss-f-1} and substituting it into \eqref{tran-kdv} we have
\begin{equation}
u=2(\ln f)_{xx}=2[\ln (1+e^{\eta_1})]_{xx},
\label{kdv-1ss-1}
\end{equation}
which is a 1SS of the KdV equation.

When $N=2$, from \eqref{f1-kdv} we have $f^{(1)}=e^{\eta_1} + e^{\eta_2}$, and then
solving equation \eqref{f-eqs-kdv-b} we find
\begin{equation}
f^{(2)}=A_{12} e^{\eta_1+\eta_2},~~ A_{12}=\Bigl(\frac{k_1-k_2}{k_1+k_2}\Bigr)^2.
\label{f2-kdv}
\end{equation}
Meanwhile, $f^{(j)}=0~(j=3,4,\cdots)$ solve the rest equations in \eqref{f-eqs-kdv}.
Thus,
\begin{equation}
f=1+\varepsilon(e^{\eta_1} + e^{\eta_2})+\varepsilon^2A_{12}\, e^{\eta_1+\eta_2}
\label{2ss-f}
\end{equation}
provides a second solution to the bilinear KdV equation \eqref{kdv-bil-D}, which
will lead to a 2SS for the KdV equation via \eqref{tran-kdv}.

When $N=3$, from \eqref{f1-kdv} we can find a solution for \eqref{f-eqs-kdv}:
\begin{align}
f=& 1+\varepsilon(e^{\eta_1} + e^{\eta_2} + e^{\eta_3})\nonumber\\
  & +\varepsilon^2(A_{12} e^{\eta_1+\eta_2}+A_{13} e^{\eta_1+\eta_3}+A_{23} e^{\eta_2+\eta_3})\nonumber\\
  & +\varepsilon^3 A_{12}A_{13}A_{23} e^{\eta_1+\eta_2+\eta_3},~~~ A_{ij}=\Bigl(\frac{k_i-k_j}{k_i+k_j}\Bigr)^2.
\label{3ss-f}
\end{align}
Again, through \eqref{tran-kdv} it gives a 3SS to the KdV equation.
Note that Hirota got the above 3SS in May of 1971 \cite{Hir-2004}.

For a general number $N$, Hirota presented the following compact form:
\begin{equation}
f=\sum_{\mu=0,1} \mathrm{exp}\left(\sum^{N}_{j=1} \mu_j \eta_j+\sum^N_{1\leq i<j}\mu_i\mu_j a_{ij}\right),
\label{Nss-f}
\end{equation}
where $\eta_j$ is defined as in \eqref{eta-i-kdv}, $e^{a_{ij}}=A_{ij}$, and the summation of $\mu$ means to take
all possible $\mu_j=\{0,1\}$ $(j=1,2,\cdots, N)$.
NSS of the KdV equation is given by \eqref{tran-kdv}.
Eq.\eqref{Nss-f} is a truncated form of \eqref{f-expand} (we have taken $\varepsilon=1$ since  $\varepsilon$
is actually arbitrary).
A proof of \eqref{Nss-f} satisfying \eqref{kdv-bil-D} can be found in \cite{Hir-1971} or \cite{AblS-1981}
or \cite{Chen-book-2006}.

\subsubsection{NSS of the KP(II) equation}\label{sec-2-2-2}

The KP equation reads
\begin{equation}
(4u_t+6uu_x+u_{xxx})_x + 3\sigma u_{yy}=0,~~ (\sigma=\pm 1);
\label{kp-eq}
\end{equation}
when $\sigma=1$ it is known as the KP(II) equation and when $\sigma=-1$ it is the KP(I) equation.
We solve the KP(II) equation. By the transformation
\begin{equation}
u=2(\ln f)_{xx},
\label{tran-kp}
\end{equation}
the KP(II) equation is bilinearized as \cite{Sat-1976}
\begin{equation}
(4D_xD_t +D^4_x+3 D_y^2) f\cdot f=0.
\label{kp-bil-D}
\end{equation}
With the expension
\begin{equation}
f=1+\sum^{\infty}_{i=1} f^{(i)} \varepsilon^i,
\label{f-expand-kp}
\end{equation}
the bilinear KP(II) equation \eqref{kp-bil-D} yields
\begin{subequations}\label{f-eqs-kp}
\begin{align}
& \varepsilon^1: ~~~ (4\partial_{xt}+\partial^4_x+3\partial_y^3) f^{(1)}=0,\label{f-eqs-kp-a}\\
& \varepsilon^2: ~~~ (4\partial_{xt}+\partial^4_x+3\partial_y^3) f^{(2)}=-\frac{1}{2}(4D_xD_t +D^4_x+3 D_y^2) f^{(1)}\cdot f^{(1)},\label{f-eqs-kp-b}\\
& ~~~ \cdots \cdots .\nonumber
\end{align}
\end{subequations}
For \eqref{f-eqs-kp-a} we can take $f^{(1)}=e^{kx+hy+\omega t }$, and $k,h,\omega$ satisfy
the dispersion relation
\begin{equation}
4k\omega +k^4 + 3h^2=0.
\end{equation}
Compared with the KdV equation, here we have two free parameters.
For a better parametrisation, we introduce $h=ak$, by which we have $\omega=-k(k^2+3a^2)/4$.
Here $a$ is an arbitrary parameter, as free as $k$.
In practice we take $k=p-q,~ a=p+q$, and it follows that
\begin{equation}
k=p-q,~~ h=p^2-q^2,~~ \omega=-(p^3-q^3).
\label{kp-DR}
\end{equation}
Thus, we have
\begin{equation}
f^{(1)}=e^{\eta_1},
\label{f1-kp}
\end{equation}
where
\begin{equation}
\eta_i=(p_i-q_i)x+ (p_i^2-q_i^2)y - (p_i^3-q_i^3) t +\eta_i^{(0)}, ~~ (p_i, q_i, \eta_i^{(0)}\in \mathbb{R}).
\label{eta-i-kp}
\end{equation}
Then 1SS of the KP(II) equation can be obtained as (taking $\varepsilon=1$)
\begin{equation}
u=2(\ln f)_{xx}=\frac{(p_1-q_1)^2}{2} \mathrm{sech}^2 \frac{\eta_1}{2}.
\label{kp-1ss-1}
\end{equation}

To get more solutions, similar to the KdV equation, we take
\begin{equation}
f^{(1)}=\sum^N_{i=1}e^{\eta_i},
\label{f1-kp-2}
\end{equation}
where $\eta_i$ is defined by \eqref{eta-i-kp}.
When $N=2$, corresponding to 2SS, we find (taking $\varepsilon=1$)
\begin{subequations}\label{kp-2ss-f}
\begin{equation}
f=1+ e^{\eta_1} + e^{\eta_2}+ A_{12}\, e^{\eta_1+\eta_2},
\label{2ss-f-kp}
\end{equation}
which solves  \eqref{kp-bil-D}, where
\begin{equation}
A_{ij}=\frac{(p_i-p_j)(q_i-q_j)}{(p_i-q_j)(q_i-p_j)}.
\label{Aij-kp}
\end{equation}
\end{subequations}
Continuing this procedure, for 3SS, $f$ has the same structure as \eqref{3ss-f}. For NSS, there is
\begin{equation}
f=\sum_{\mu=0,1} \mathrm{exp}\left(\sum^{N}_{j=1} \mu_j \eta_j+\sum^N_{1\leq i<j}\mu_i\mu_j a_{ij}\right),
\label{Nss-f-kp}
\end{equation}
where $\eta_j$ is defined as in \eqref{eta-i-kp}, $e^{a_{ij}}=A_{ij}$ which is defined in \eqref{Aij-kp},
and the summation of $\mu$ means to take
all possible $\mu_j=\{0,1\}$ $(j=1,2,\cdots, N)$.
Then, NSS of the KP(II) equation is obtained via \eqref{tran-kp}.

\subsection{Asymptotic analysis of 2SS}\label{sec-2-3}

Classical solitons are characterized by their particle behavior, namely,
the elastic scattering property: each soliton keeps its own amplitude  and velocity after interaction.
In the following, for the KdV equation and the KP equation,
we illustrate such  scattering properties with mathematical explanations.

\subsubsection{The KdV equation}\label{sec-2-3-1}

First, let us look at 1SS \eqref{kdv-1ss-1} of the KdV equation, i.e.
\begin{equation}
u=\frac{k_1^2}{2} \mathrm{sech}^2 \eta_1,~~ \eta_1=k_1x-k_1^3t+\eta^{(0)}_1,
\label{kdv-1ss-2}
\end{equation}
which describes a solitary wave as depicted in Figure \ref{fig-1-1-a}.
\begin{figure}[!ht]
\centering \subfigure[]
{\label{fig-1-1-a} 
\includegraphics[width=2.5in]{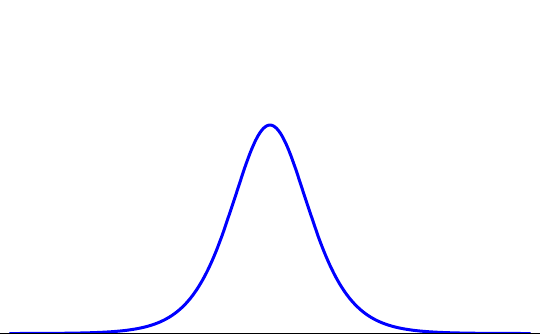}
}~~~~ \subfigure[]{
\label{fig-1-1-b} 
\includegraphics[width=2.0in]{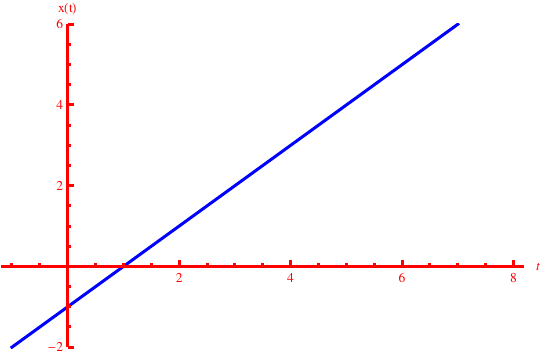}
}\\
\caption{{\small (a) 1SS of the KdV equation.~
(b) Trajectory of the vortex of 1SS: $\eta_1=0$, where $k_1=1, ~\eta_1^{(0)}=0$. \label{fig-1-1}}}
\end{figure}
The maximum of the wave, i.e. the amplitude, which is $\frac{k^2_1}{2}$, occurs when $\eta_1=0$.
When
$\eta_1=0$, i.e.
\begin{equation}
x(t)=k^2_1 t -\frac{\eta^{(0)}_1}{k_1},
\label{eta1=0}
\end{equation}
we have a straight line as depicted in Figure \ref{fig-1-1-b}, denoting  the  trajectory of the vertex of the wave.
From \eqref{eta1=0} we have
$x'(t)=k^2_1$, which stands for the velocity of the wave.
One can see that the velocity is always positive, which means the solitary wave described by the KdV equation is of single direction.

2SS of the KdV equation exhibits elastic scattering behavior, as depicted in Figure \ref{fig-1-2} and Figure \ref{fig-1-3}.
The 2SS is
\begin{subequations}\label{kdv-2ss}
\begin{align}
& u=2(\ln f)_{xx},\label{kdv-2ss-a}\\
& f=1+e^{\eta_1} + e^{\eta_2}+A_{12}\, e^{\eta_1+\eta_2},\label{kdv-2ss-b}
\end{align}
where
\begin{equation}
\eta_i=k_ix-k_i^3 t +\eta_i^{(0)},~~  A_{12}=\Bigl(\frac{k_1-k_2}{k_1+k_2}\Bigr)^2.\label{kdv-2ss-c}
\end{equation}
\end{subequations}
\begin{figure}[!ht]
\centering \subfigure[]
{\label{fig-1-2-a} 
\includegraphics[width=2.0in]{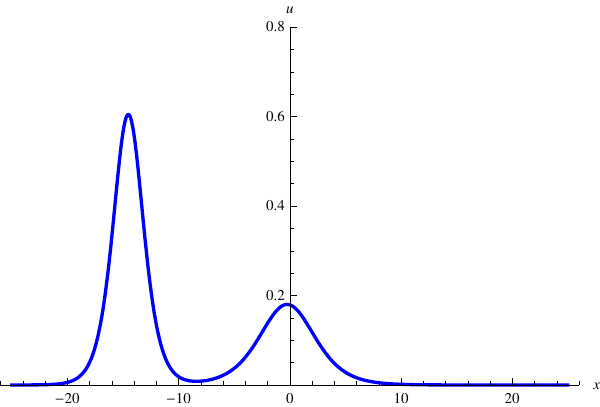}
}~~\subfigure[]{
\label{fig-1-2-b} 
\includegraphics[width=2.0in]{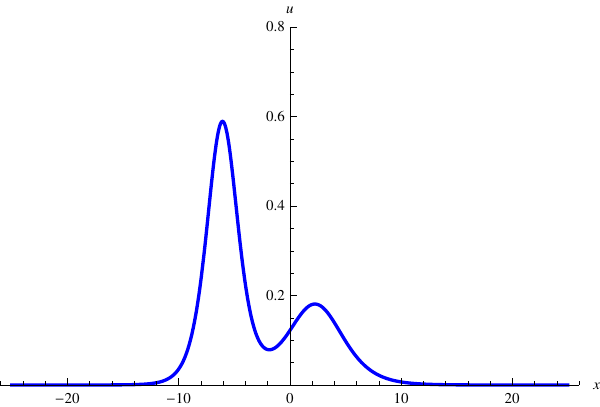}
}\\ \subfigure[]
{\label{fig-1-2-c} 
\includegraphics[width=2.0in]{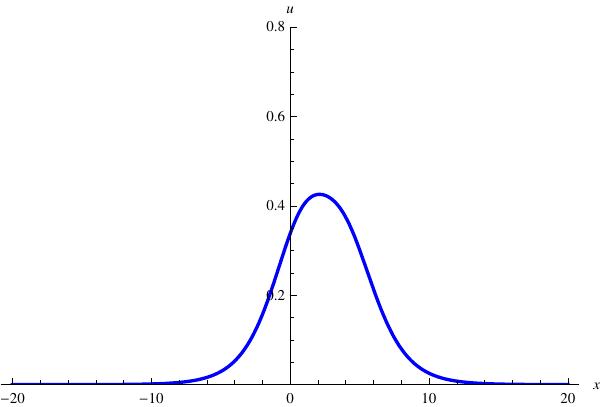}
}~~\subfigure[]
{\label{fig-1-2-d} 
\includegraphics[width=2.0in]{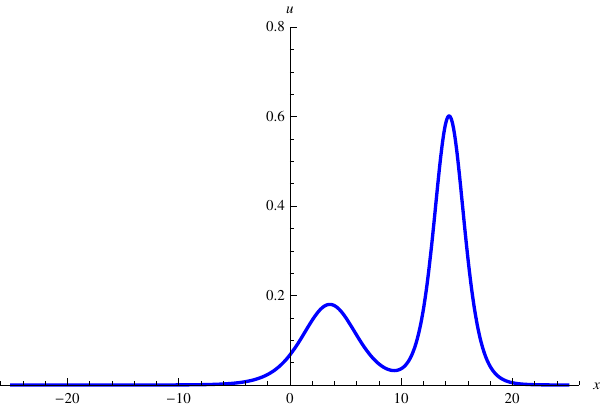}}\\
\caption{{\small 2SS \eqref{kdv-2ss} of the KdV equation ($k_1=1.1,~ k_2=0.6,~ \eta^{(0)}_1= \eta^{(0)}_2=0$).}~~
{\small
(a) $t=-12$,~ (b) $t=-5$,~ (c) $t=1$,~ (d) $t=10$.} \label{fig-1-2}}
\end{figure}
\begin{figure}[!ht]
\centering
\includegraphics[width=2.5in]{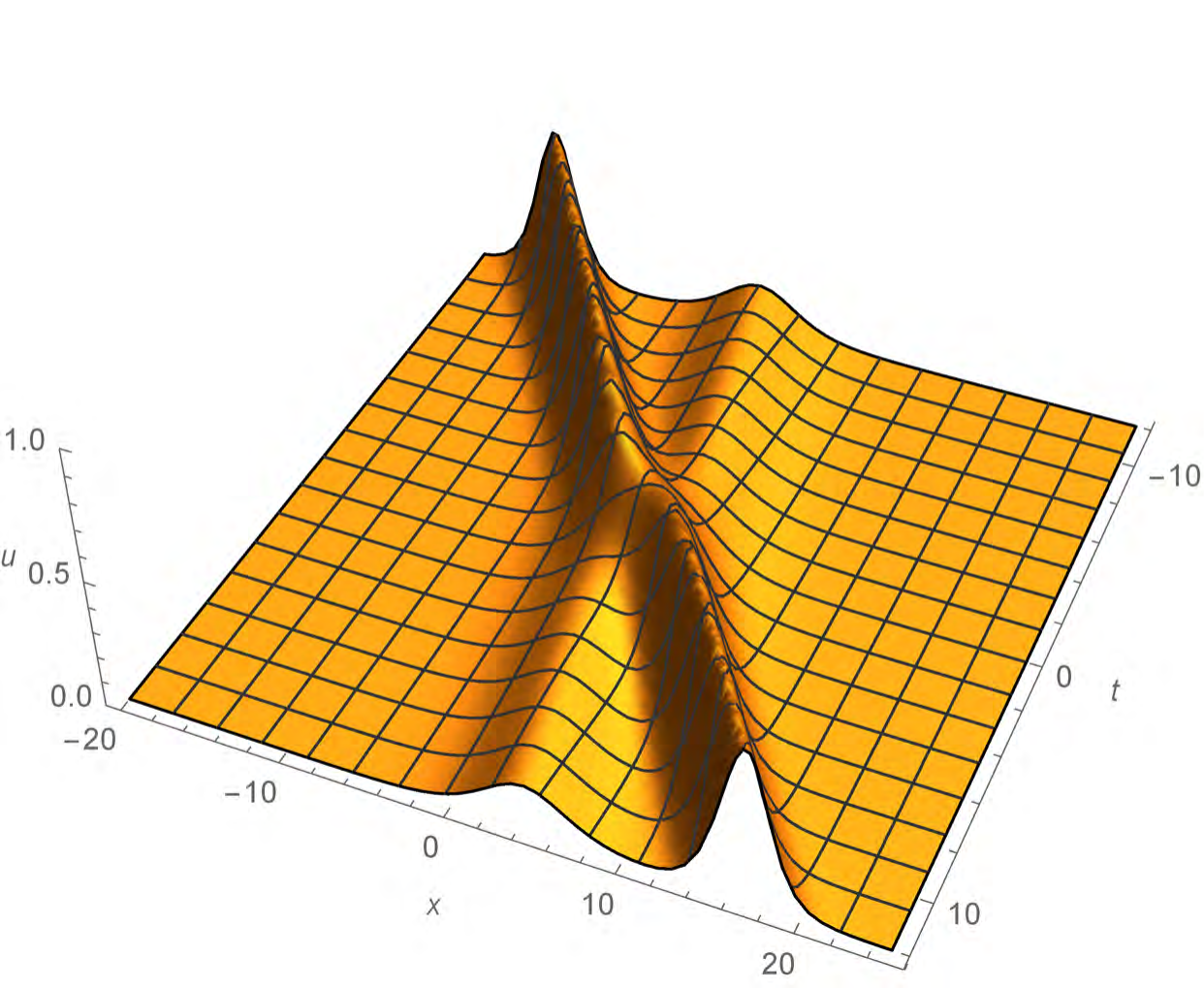}
\\
\caption{{\small  2SS \eqref{kdv-2ss} of the KdV equation ($k_1=1.1,~ k_2=0.6,~ \eta^{(0)}_1= \eta^{(0)}_2=0$).} \label{fig-1-3}}
\end{figure}

Such a scattering behavior can be understood mathematically using asymptotic analysis,
of which the generic procedure is the following.

We consider $f$ in \eqref{kdv-2ss-b} and assume $k_1>k_2>0$ without loss of generality.
Suppose $\eta_1=c$ ($c$ is certain constant) so that we can observe the 2SS along the straight line $\eta_1=c$.
To do that, we rewrite \eqref{kdv-2ss-b} in the new coordinate system $(\eta_1,t)$:
\begin{subequations}\label{f2-(eta-t)-1}
\begin{equation}
f=1+e^{\eta_1} + e^{\eta_2}+A_{12}\, e^{\eta_1+\eta_2},\label{f2-(eta-t)-1-a}
\end{equation}
where
\begin{equation}
 e^{\eta_2}=\mathrm{exp}\Bigl[{\frac{k_2}{k_1}\eta_1+k_2(k_1^2-k_2^2)t+\eta^{(0)}_2-\frac{k_2}{k_1}\eta^{(0)}_1}\Bigr].\label{f2-(eta-t)-1-b}
\end{equation}
\end{subequations}
Because of $k_1>k_2>0$, we find
\[e^{\eta_2} \sim \left\{
\begin{array}{ll}
0, & t\to -\infty,\\
+\infty, & t\to +\infty.
\end{array}\right.
\]
Therefore  in $(\eta_1,t)$ we have
\begin{equation}
f \sim \left\{
\begin{array}{ll}
1+ e^{\eta_1}, & t\to -\infty,\\
e^{\eta_2}(1+A_{12}e^{\eta_1}), & t\to +\infty.
\end{array}\right.
\label{f-asy-2ss}
\end{equation}
Due to the gauge property of bilinear derivatives,
the factor $e^{\eta_2}$ in the above does not change solutions of the bilinear KdV equation
\eqref{kdv-bil-D}, and does not change the value of $u=2(\ln f)_{xx}$ either.
Thus, if we observe 2SS along the straight line $\eta_1=c$, when $t\to -\infty$ we only see the 1SS
\begin{subequations}\label{kdv-sol-k1}
\begin{equation}
u=2[\ln(1+e^{\eta_1})]_{xx};
\label{kdv-sol-k1-a}
\end{equation}
and when $t \to +\infty$, we see
\begin{equation}
u=2[\ln(1+A_{12}e^{\eta_1})]_{xx}.
\label{kdv-sol-k1-b}
\end{equation}
\end{subequations}
This is still the original 1SS \eqref{kdv-sol-k1-a} (after interaction it has the amplitude and velocity as same as before)
but there is a phase shift $-\frac{2}{k_1}\ln \frac{k_1-k_2}{k_1+k_2}$.

Similarly, in the coordinate system $(\eta_2,t)$ we can see
\[e^{\eta_1} \sim \left\{
\begin{array}{ll}
+\infty, & t\to -\infty,\\
0, & t\to +\infty.
\end{array}\right.
\]
Then we have
\begin{equation}
f \sim \left\{
\begin{array}{ll}
e^{\eta_1}(1+A_{12}e^{\eta_2}), & t\to -\infty,\\
1+ e^{\eta_2}, & t\to +\infty.
\end{array}\right.
\end{equation}
Thus we can see that the soliton determined by $k_2$ keeps its  amplitude and velocity before and  after the interaction
but gains a phase shift  $\frac{2}{k_2}\ln  \frac{k_1-k_2}{k_1+k_2}$.
Such phase shifts due to the interaction can be seen in  Figure \ref{fig-1-3}.

\subsubsection{The KP(II) equation}\label{sec-2-3-2}

The 1SS \eqref{kp-1ss-1} of the KP(II) equation is depicted in Figure \ref{fig-1-4-a}.
At any given time $t$ it exhibits like a straight line on the $(x,y)$ plane, with amplitude $(p_1-q_1)^2/2$.
The  straight line is given by $\eta_1=0$ and it also provides the velocity by which  the line moves on the $(x,y)$ plane:
\[(x'(t), y'(t))=-(p^2+pq+q^2)\Bigl( 1, ~\frac{1}{p+q}\Bigr ).\]

\begin{figure}[!ht]
\centering \subfigure[]
{\label{fig-1-4-a} 
\includegraphics[width=2.5in]{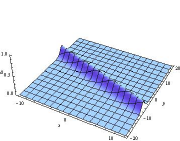}
}~~~~ \subfigure[]{
\label{fig-1-4-b} 
\includegraphics[width=2.1in]{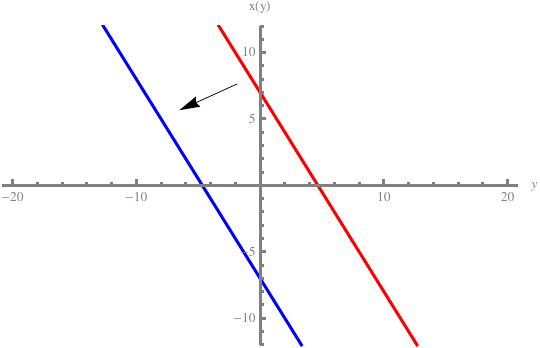}
}\\
\caption{{\small (a) 1SS of the KP(II). It is given by \eqref{kp-1ss-1}
with $(p_1, q_1, \eta^{(0)}_1)=(0.5, 1, 0), t=0$.
~
(b) Trajectory of the line soliton in (a): red line is for $t=-4$ and blue for $t=4$. \label{fig-1-4}}}
\end{figure}

For the 2SS given by \eqref{tran-kp} with \eqref{kp-2ss-f}, at a given time $t$ it behaves like two lines crossed in Figure \ref{fig-1-5}.
When $t$ is fixed we can consider $t$ as a constant and analyze the asymptotic behaviors when  $y\to \pm \infty$. When $A_{12}\neq 0$,
the procedure is similar to the KdV case in Sec.\ref{sec-2-3-1}, and here we skip it.
\begin{figure}[!ht]
\centering \subfigure[]
{\label{fig-1-5-a} 
\includegraphics[width=2.5in]{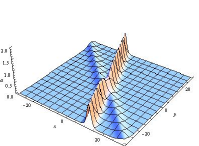}
}~~~~ \subfigure[]{
\label{fig-1-5-b} 
\includegraphics[width=2.5in]{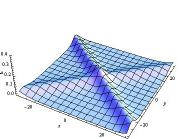}
}\\
\caption{{\small 2SS of the KP(II) equation. It is given by \eqref{tran-kp} with \eqref{kp-2ss-f}: (a) $(p_1, q_1, \eta^{(0)}_1)=(0.8, 0.2, 0)$,
$(p_2, q_2, \eta^{(0)}_2)=(-0.5, 0.9, 0)$, $t=0$;
~
(b) $(p_1, q_1, \eta^{(0)}_1)=(-0.4, 0.7, 0)$,
$(p_2, q_2, \eta^{(0)}_2)=(0.8, 0.4, 0)$, $t=0$. \label{fig-1-5}}}
\end{figure}

Let us consider the case of $A_{12}=0$.
Recalling in Sec.\ref{sec-2-3-1} for the KdV equation, its 1SS \eqref{kdv-1ss-2} is completely determined by $k_1$;
in 2SS \eqref{kdv-2ss} if $k_1=k_2$ then $A_{12}=0$ and the 2SS degenerates to 1SS.
However, for the KP(II) equation, its 1SS is determined by two parameters, $p_1$ and $q_1$.
Particularly, based on the special form of $A_{12}$, i.e. \eqref{Aij-kp}:
\begin{equation}
A_{12}=\frac{(p_1-p_2)(q_1-q_2)}{(p_1-q_2)(q_1-p_2)},
\label{A12-kp}
\end{equation}
when $p_1\neq p_2$ but $q_1=q_2$, we have $A_{12}=0$ and  consequently \eqref{2ss-f-kp} degenerates to
\begin{equation}
f=1+ e^{\eta_1} + e^{\eta_2},
\label{2ss-f-kp-res}
\end{equation}
where $\eta_i$ is given as \eqref{eta-i-kp}.
In this case, the 2SS does not degenerate to 1SS but exhibits resonance of two line solitons, as described in Figure \ref{fig-1-6}.
\begin{figure}[!ht]
\centering
\includegraphics[width=2.5in]{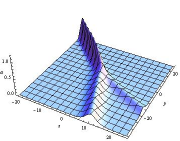}
\\
\caption{{\small  2SS resonance of the KP(II) equation. It is given by \eqref{tran-kp} with \eqref{kp-2ss-f}, where $p_1=1.0,  p_2=-0.2$,
$q_1=q_2=0.5$, $\eta^{(0)}_1= \eta^{(0)}_2=0$, $t=0$.} \label{fig-1-6}}
\end{figure}

Resonance of solitons can be understood as a behavior occurring when some parameters tend to be same ($q_1=q_2$ can be considered as a result of $q_2\to q_1$).
Such a phenomena  of solitary waves has been studied by   Zakharov and Shabat \cite{ZakS-1972},
and Miles \cite{Mil-1978}, etc.

Now let us give an asymptotic analysis for the resonance described in Figure \ref{fig-1-6}.
Consider the simplified case in which we take $t=0$ and $\eta^{(0)}_i=0$.
Thus in \eqref{2ss-f-kp-res}, there is
\[\eta_i=k_i x+h_i y,~~ k_i=p_i-q_i,~ h_i=p_i^2-q_i^2.\]
We rewrite $\eta_2$ in terms of the coordinate  $(\eta_1, y)$:
\[\eta_2=\frac{k_2}{k_1}\eta_1+\frac{1}{k_1}(k_1h_2-k_2h_1)y.\]
Using the data in Figure \ref{fig-1-6} (noticing that $k_1>0, ~k_1h_2-k_2h_1>0$),
we find
\[e^{\eta_2} \sim \left\{
\begin{array}{ll}
0, & y\to -\infty,\\
+\infty, & y\to +\infty.
\end{array}\right.
\]
Thus, in the coordinate  $(\eta_1,y)$ we have
\[
f \sim \left\{
\begin{array}{ll}
1+ e^{\eta_1}, & y\to -\infty,\\
e^{\eta_2}, & y\to +\infty,
\end{array}\right.
\]
and for $u$ in $(\eta_1,y)$ we can see the following,
\[
u \sim \left\{
\begin{array}{ll}
\frac{k_1^2}{2}\mathrm{sech}^2 \eta_1, & y\to -\infty,\\
0, & y\to +\infty.
\end{array}\right.
\]
In a similar way, if in the coordinate   $(\eta_2,t)$ we can see that
\[
u \sim \left\{
\begin{array}{ll}
\frac{k_2^2}{2}\mathrm{sech}^2 \eta_2, & y\to -\infty,\\
0, & y\to +\infty.
\end{array}\right.
\]
These can explain the two ``legs'' in Figure \ref{fig-1-6} and there are no solitons traveling along
the previous directions when $y\to +\infty$.

Next, let us figure out the characteristic of the soliton when $y\to +\infty$.
In the  coordinate system  $(\eta_1-\eta_2, y)$ we  find (noticing that with the date in Figure \ref{fig-1-6} we have $k_1-k_2>0, ~k_1h_2-k_2h_1>0$)
\[e^{\eta_i}=e^{\frac{k_i}{k_1-k_2}(\eta_1-\eta_2)+\frac{k_1h_2-k_2h_1}{k_1-k_2}y}
\sim \left\{
\begin{array}{ll}
0, & y\to -\infty,\\
+\infty, & y\to +\infty.
\end{array}\right.
\]
Then, in the coordinate system $(\eta_1-\eta_2,y)$ there is
\[
f=1+e^{\eta_2}(1+e^{\eta_1-\eta_2}) \sim \left\{
\begin{array}{ll}
0, & y\to -\infty,\\
e^{\eta_2}(1+e^{\eta_1-\eta_2}), & y\to +\infty.
\end{array}\right.
\]
Thus, for $u$ in $(\eta_1-\eta_2,y)$ we can see that
\[
u \sim \left\{
\begin{array}{ll}
0, & y\to -\infty,\\
\frac{(k_1-k_2)^2}{2}\mathrm{sech}^2 (\eta_1-\eta_2), & y\to +\infty.
\end{array}\right.
\]

\textit{Note: Asymptotic analysis is helpful to understanding interactions of solitons and explaining special scattering behaviors.
For more examples of multi-soliton interactions and their asymptotic analysis, one can refer to the review paper \cite{Hie-2002}
by Jarmo Hietarinta.
For the various resonances of line solitons of the KP(II) equation and their related interesting
mathematical structures, one can refer to \cite{Kod-2010,KodW-2011a,KodW-2011b}, etc,  by Yuji Kodama
and his collaborator.}

\section{Hirota's integrability, elastic scattering and 3SS condition}\label{sec-3}

In this section we will focus on the KdV-type bilinear equations
which always admit 1SS and 2SS, no matter whether they are integrable or not.
We will introduce integrability of this type bilinear equations in Hirota's sense.
Then we will explain the property of elastic scattering of multi-solitons and its connection with the 3SS condition.
This may help us to understand why the 3SS condition implies integrability for the KdV-type bilinear equations.

\subsection{2SS of the KdV-type bilinear equations}\label{sec-3-1}

It is amazing that the bilinear equations of the KdV type automatically admit  1SS and 2SS.

\subsubsection{Bilinear equations of the KdV type}\label{sec-3-1-1}

Consider the following bilinear equation
\begin{equation}
P(D_{\mathbf{t}})f\cdot f=0,
\label{kdvtype-bil}
\end{equation}
where $P$ is an even polynomial, i.e. $P(\mathbf{t})=P(-\mathbf{t})$ and satisfying $P(\mathbf{0})=0$.
Such a bilinear equation  \eqref{kdvtype-bil} is called  a bilinear equation of the KdV type \cite{Hir-1980,Hie-1987a}.
Assume that
\begin{subequations}\label{kdv-type-1ss}
\begin{equation}
f=1+e^{\eta_1},
\label{kdv-type-1ss-a}
\end{equation}
where
\begin{equation}
\eta_1=\mathbf{p_1}\cdot \mathbf{t}+\eta^{(0)}_1.
\label{kdv-type-1ss-b}
\end{equation}
\end{subequations}
It follows that
\[f\cdot f=1 \cdot 1 +1\cdot e^{\eta_1}+e^{\eta_1}\cdot 1 + e^{\eta_1}\cdot e^{\eta_1}.\]
Since the terms $1 \cdot 1$ and  $ e^{\eta_1}\cdot e^{\eta_1}$ varnish under the action of $P(D_{\mathbf{t}})$,
we have
\[P(D_{\mathbf{t}})f\cdot f =2P(D_{\mathbf{t}})e^{\eta_1}\cdot 1
=2 P(\partial_{\mathbf{t}})e^{\eta_1} = 2 P(\mathbf{p_1}) e^{\eta_1}.
\]
Thus, once $P(\mathbf{p_1})=0$, \eqref{kdv-type-1ss} gives a solution to \eqref{kdvtype-bil}.
We call $P(\mathbf{p_1})=0$ to be the dispersion relation  of the bilinear equation \eqref{kdvtype-bil}.

Consider
\begin{subequations}\label{kdv-type-2ss}
\begin{equation}
f=1+e^{\eta_1}+e^{\eta_2}+A_{12}e^{\eta_1+\eta_2},
\label{kdv-type-2ss-a}
\end{equation}
where $A_{12}$ is a constant to be determined,
\begin{equation}
\eta_i=\mathbf{p_i}\cdot \mathbf{t}+\eta^{(0)}_i,
\label{kdv-type-2ss-b}
\end{equation}
satisfying the dispersion relation
\begin{equation}
P(\mathbf{p_i})=0.
\label{kdv-type-2ss-c}
\end{equation}
Substituting \eqref{kdv-type-2ss-a} into   equation \eqref{kdvtype-bil} and making use of the
dispersion relation \eqref{kdv-type-2ss-c},
it is easy to see that \eqref{kdv-type-2ss-a} satisfies  equation \eqref{kdvtype-bil} provided
\begin{equation}
A_{12}=-\frac{P(\mathbf{p_1}-\mathbf{p_2})}{P(\mathbf{p_1}+\mathbf{p_2})}.
\label{kdv-type-2ss-d}
\end{equation}
\end{subequations}
It is Hirota who first found this fact \cite{Hir-1980}.
Here we note that ``automatically'' existing  2SS means there is no extra condition  on $\mathbf{p_i}$ beyond
the dispersion relation \eqref{kdv-type-2ss-c}.

\subsubsection{Other cases}\label{sec-3-1-2}

As an example let us look at the following bilinear system (\cite{Hie-1987c}, Sec.6):
\begin{subequations}\label{sgtype-bil}
\begin{align}
& B(D_{\mathbf{t}})G\cdot F=0,\label{sgtype-bil-a}\\
& A(D_{\mathbf{t}})(F\cdot F+\epsilon G\cdot G)=0,\label{sgtype-bil-b}
\end{align}
\end{subequations}
where $A$ is an even polynomial, $B$ is a polynomial
and  $\epsilon=\pm 1$. The above equation system admits a 1SS;
\begin{equation*}
F=1,~~ G=e^{\eta_1},~~ \eta_1=\mathbf{p_1}\cdot \mathbf{t}+\eta^{(0)}_1,
\end{equation*}
where $\eta_1$ satisfies dispersion relation: $B(\mathbf{p_1})=0$.
One type of 2SS of \eqref{sgtype-bil} is:
\begin{subequations}
\begin{align}
F=1-A_{12}e^{\eta_1+\eta_2},~~G=e^{\eta_1} + e^{\eta_2},
\end{align}
where $\eta_i=\mathbf{p_i}\cdot \mathbf{t}+\eta^{(0)}_i$ satisfies  the dispersion relation $B(\mathbf{p_i})=0$, and
\begin{equation}
A_{12}=-\epsilon \, \frac{A(\mathbf{p_1}-\mathbf{p_2})}{A(\mathbf{p_1}+\mathbf{p_2})}.
\end{equation}
\label{sg-type-2ss}
\end{subequations}

Not any bilinear equation (system) will automatically admit a 2SS.
For example, the following bilinear   system (the NLS-type) \cite{Hie-1988}:
\begin{subequations}\label{nlstype-bil}
\begin{align}
& B(D_{\mathbf{t}})G\cdot F=0,\label{nlstype-bil-a}\\
& A(D_{\mathbf{t}})F\cdot F=2\epsilon |G|^2,\label{nlstype-bil-b}
\end{align}
\end{subequations}
where $A$ is an even polynomial, $B$ is a polynomial, $F\in \mathbb{R}(\mathbf{t})$,
$G\in \mathbb{C}(\mathbf{t})$, $|G|^2=GG^*$ and $*$ stands for complex conjugate. It has a 1SS:
\begin{equation*}
F=1+a \, e^{\eta_1+\eta_1^*},~~ G=e^{\eta_1},
\end{equation*}
where $\eta_1=\mathbf{p_1}\cdot \mathbf{t}+\eta^{(0)}_1$, $\mathbf{p}\in \mathbb{C}^s$, $\eta_1^{(0)}\in \mathbb{C}$, $B(\mathbf{p_1})=0$ and
$a=\frac{-\epsilon}{A(\mathbf{p_1}+\mathbf{p_1}^*)}$.
However, its 2SS does not exist automatically  \cite{Hie-1988}.

\subsection{Hirota's integrability and 3SS condition}\label{sec-3-2}

\subsubsection{Hirota's integrability of the KdV type}\label{sec-3-2-1}

Take the KdV-type bilinear equation \eqref{kdvtype-bil} as an example. For such an equation, i.e.
\begin{equation}
P(D_{\mathbf{t}})f\cdot f=0,
\label{kdvtype-bil-2}
\end{equation}
it is said to be Hirota-integrable if for all positive integers $N$, it has a NSS of the form
\begin{equation}
f=1+\varepsilon \sum^{N}_{i=1}e^{\eta_i}+\{\hbox{finite number of higher-order terms in $\varepsilon$}\},
\label{f-Nss}
\end{equation}
without any further conditions on the parameters $\mathbf{p_i}$ beyond the dispersion relation
\begin{equation}
P(\mathbf{p_i})=0.
\label{DR-kdv}
\end{equation}

In general, for a bilinear equation (system), when it has a 1SS which is described by  $e^{\eta_i}$ $(\eta_i=\mathbf{p_i}\cdot \mathbf{t}+\eta^{(0)}_i)$,
there is a condition on $e^{\eta_i}$, for example, the dispersion relation, which we call 1SS-condition.
If the bilinear equation (system) allows a solution which is described by a polynomial  of
arbitrarily many $\{e^{\eta_i}\}$ (where each $e^{\eta_i}$ describes one single soliton),
and there is no extra condition on each $e^{\eta_i}$ besides the 1SS-condition,
we say the bilinear equation (system) is \textbf{Hirota integrable}.

Hirota presented the following form for the NSS of the KdV-type bilinear equation \eqref{kdvtype-bil-2} \cite{Hir-1980}:
\begin{subequations}\label{kdv-type-Nss}
\begin{equation}
f=\sum_{\mu=0,1} \mathrm{exp}\left(\sum^{N}_{j=1} \mu_j \eta_j+\sum^N_{1\leq i<j}\mu_i\mu_j a_{ij}\right),
\label{kdv-type-Nss-a}
\end{equation}
where $\eta_j=\mathbf{p_j}\cdot \mathbf{t}+\eta^{(0)}_j$ are defined as \eqref{kdv-type-2ss-b},
\begin{equation}
P(\mathbf{p_i})=0,~~~e^{a_{ij}}=A_{ij}=-\frac{P(\mathbf{p_i}-\mathbf{p_j})}{P(\mathbf{p_i}+\mathbf{p_j})},
\label{kdv-type-Nss-b}
\end{equation}
\end{subequations}
the summation of $\mu$ means to take
all possible $\mu_j=\{0,1\}$ $(j=1,2,\cdots, N)$;
and further than that,
the following condition is needed:
\begin{equation}
\sum_{\sigma=\pm 1} \left[P(\sum^{N}_{j=1} \sigma_j \mathbf{p_j})\times
\Bigl(\, \prod_{1\leq i<j\leq N}\sigma_i\sigma_jP(\sigma_i \mathbf{p_i}-\sigma_j \mathbf{p_j})\Bigr)\right]=0,
\label{kdv-type-Nss-cond}
\end{equation}
the summation of $\sigma$ means to take
all possible $\sigma_j=\{1,-1\}$ $(j=1,2,\cdots, N)$.
This condition holds automatically for the  $N=2$ case.

\subsubsection{Elastic scattering property and 3SS condition}\label{sec-3-2-2}

Those bilinear equations (systems) that automatically have 2SS may not have a 3SS;
even when they have a 3SS, they might not be Hirota integrable (there may be extra condition on $\mathbf{p_i}$).
One famous example is the
(2+1)-dimensional sG equation (see \cite{Hir-1973})
\begin{equation}
\varphi_{xx}+\varphi_{yy}-\varphi_{tt}=\sin \varphi,
\label{sg-2+1}
\end{equation}
which, by the transformation
\begin{equation}
\varphi=4 \arctan \frac{g}{f},
\label{tran-sg}
\end{equation}
is transformed into bilinear form
\begin{subequations}
\begin{align}
& (D_x^2+D_y^2-D_t^2)g\cdot f=gf, \\
& (D_x^2+D_y^2-D_t^2)(f\cdot f-g\cdot g)=0.
\end{align}
\label{sg-2+1-bil}
\end{subequations}
The above system is a special case of \eqref{sgtype-bil}
and it  has a 2SS automatically (cf. \eqref{sg-type-2ss}). Its 3SS reads \cite{Hir-1973}
\begin{subequations}\label{sg-2+1-2ss}
\begin{align}
& f= 1+ A_{12}e^{\eta_1+\eta_2}+ A_{13}e^{\eta_1+\eta_3} + A_{23}e^{\eta_2+\eta_3}, \\
& g= e^{\eta_1}+ e^{\eta_2} + e^{\eta_3} + A_{12}A_{13}A_{23} e^{\eta_1+\eta_2+\eta_3},
\end{align}
where
\begin{align}
& \eta_i=a_i x+b_i y - c_i t +\eta^{(0)}_i,\\
& \hbox{dispersion~relation:} ~ a^2_i+b_i^2-c_i^2=1,\label{sg-2+1-2ss-DR}\\
& A_{ij}=\frac{(a_i-a_j)^2+(b_i-b_j)^2-(c_i-c_j)^2}{(a_i+a_j)^2+(b_i+b_j)^2-(c_i+c_j)^2},
\end{align}
and an extra condition is needed:
\begin{equation}
\left|\begin{array}{ccc}
a_1 & a_2 & a_3\\
b_1 & b_2 & b_3\\
c_1 & c_2 & c_3
\end{array}
\right|=0.
\label{sg-2+1-2ss-cond}
\end{equation}
\end{subequations}
In the above results, for the 2SS which exists automatically, the only condition on  $\{a_i,b_i,c_i\}$ is the
dispersion relation \eqref{sg-2+1-2ss-DR};
however, if 3SS exists, in addition to the dispersion relation, the extra condition \eqref{sg-2+1-2ss-cond} is required.
Thus,   \eqref{sg-2+1-bil} is an bilinear system that possesses a 3SS
but is not integrable in Hirota's sense.
Hirota once proposed such a question in \cite{Hir-1980} for the KdV-type bilinear equations:
``Under what conditions does $P$ satisfy the identity \eqref{kdv-type-Nss-cond}?"

For the KdV-type bilinear equation \eqref{kdvtype-bil},
\textbf{3SS-condition} is specially referred to the following:
the bilinear equation \eqref{kdvtype-bil} has a 3SS,
and the condition on each $\eta_i$
is nothing beyond the 1SS-condition.
In general, it is conjectured that for  KdV-type bilinear equation \eqref{kdvtype-bil}
the 3SS-condition is equivalent to Hirota's integrability.

In the following, let us take the KdV-type bilinear equation  \eqref{kdvtype-bil}  as an example
to see how the form of 3SS is determined by the elastic scattering behavior of multi-solitons.

The property of elastic scattering of multi-solitons requires the following:
``removing a soliton from NSS, the left (N$-$1) solitons keep the elastic scattering structure of (N$-$1)SS".
Removing a soliton means the soliton is far from others\, ------ which, mathematically, can be done through two ways:
either $e^{\eta_k} \to 0 $ or $e^{\eta_k} \to \infty$ (refer to the asymptotic analysis of 2SS in Sec.\ref{sec-2-3}).
For the KdV-type bilinear equation \eqref{kdvtype-bil}, it automatically has a 2SS (see \eqref{kdv-type-2ss}):
\begin{subequations}\label{kdv-type-2ss-2}
\begin{equation}
f=1+e^{\eta_1}+e^{\eta_2}+A_{12}e^{\eta_1+\eta_2},
\label{kdv-type-2ss-2a}
\end{equation}
where
\begin{equation}
P(\mathbf{p_i})=0,
\label{kdv-type-2ss-2b}
\end{equation}
\begin{equation}
A_{ij}=-\frac{P(\mathbf{p_i}-\mathbf{p_j})}{P(\mathbf{p_i}+\mathbf{p_j})}.
\label{kdv-type-2ss-2c}
\end{equation}
\end{subequations}
With the requirement of the elastic scattering property, if there is no further condition on  $\mathbf{p_i}$ beyond \eqref{kdv-type-2ss-2b},
after analysis
(considering a general form for 3SS and assuming the 2SS \eqref{kdv-type-2ss-2a}
is a result of a 3SS after removing one soliton under the elastic scattering property)
we can find the 3SS  of equation \eqref{kdv-type-2ss-2}
can only be  the following form
\begin{align}
f=& 1+e^{\eta_1} + e^{\eta_2} + e^{\eta_3}\nonumber\\
  & + A_{12} e^{\eta_1+\eta_2}+A_{13} e^{\eta_1+\eta_3}+A_{23} e^{\eta_3+\eta_3}\nonumber\\
  & + A_{12}A_{13}A_{23} e^{\eta_1+\eta_2+\eta_3},
\label{kdvtype-3ss-f}
\end{align}
and $A_{ij}$ must be defined as \eqref{kdv-type-2ss-2c}.
If we start from the 3SS \eqref{kdvtype-3ss-f} and
assuming the above 3SS is the result of a 4SS after removing one soliton under the elastic scattering property,
we can reach to a form for 4SS.
Continuing such a procedure one can obtain  5SS, 6SS, $\cdots$.

Thus, for the KdV-type bilinear equation \eqref{kdvtype-bil},
if we only require the dispersion relation \eqref{kdv-type-2ss-2b} and the elastic scattering property,
its NSS (if it has) can only be the form  \eqref{kdv-type-Nss},
(cf. \eqref{Nss-f} for the KdV equation).

Now, if \eqref{kdv-type-Nss} provides a solution to  \eqref{kdvtype-bil}, then \eqref{kdvtype-bil} is Hirota's integrable.
However, not all the bilinear equations have their 3SSs which are only built on
the dispersion relation: $P(\mathbf{p_i})=0$.
In 1987, Jarmo Hietarinta found that the KdV-type bilinear equations that satisfy the 3SS-condition are:
\begin{subequations}\label{Kdv-eq-1987}
\begin{align}
& (D_x^4-4D_xD_t+3D_y^2)f\cdot f=0,\\
& (D_x^3D_t+a D_x^2 + D_tD_y)f\cdot f=0,\\
& [D_xD_t(D_x^2+\sqrt{3} D_xD_t+D_t^2)+ aD_x^2+ bD_xD_t+ cD_t^2] f\cdot f=0, \label{Kdv-eq-1987-c}\\
& (D_x^6+5D^3_xD_t-5D_t^2+D_xD_y) f\cdot f=0,
\end{align}
\end{subequations}
etc., where, $a,b,c$ are arbitrary constants. For more results for other types of bilinear equations,
one can refer to  \cite{Hie-1987a,Hie-1987b,Hie-1987c,Hie-1988} and \cite{Hie-2009}.
For the discrete bilinear equations of the KdV-type  on $3\times 3$ stencil,
the equations allowing 3SSs are obtained in \cite{HieZ-2013} by searching as
what have done in \cite{Hie-1987a} for the continuous case.

\subsubsection{Notes}\label{sec-3-2-3}

In \cite{Hie-1990} Hietarinta proposed a question
on the algorithmic way to nonlinearizing bilinear equations.
Recently, such a way has been developed in \cite{ZLZ-CTP-2025,LZZZ-2026}.
The algorithm is designed based on the relations between Hirota's bilinear derivatives and
the Bell polynomials, which are revealed by Gilson, Lambert and their collaborators
\cite{GLNW-PRSL-1996,LLSW-JPA-1994}.
Here we present nonlinear forms for some bilinear equations.
For the details and more examples about the nonlinearization procedure using the Bell polynomials,
one can refer to \cite{ZLZ-CTP-2025,LZZZ-2026}.
For the bilinear equation \eqref{Kdv-eq-1987-c} which is also called Hietartina's equation
(cf. \cite{ZHS-JCP-2018}), through the transformation $v=2(\ln f)_{x}$,
it corresponds to the nonlinear form \cite{ZLZ-CTP-2025}:
\begin{equation}\label{Hie-eq-n-uv}
v_{xxx}+v_{xtt}+3v_{x}(u_{x}+ v_{t})
+\sqrt{3}(v_{xxt}+u_{x}v_{t}+2v_{x}^{2})+au_{x}+bv_{x}+cv_{t}=0, ~~ v_x=u_t.
\end{equation}
This equation admits a 3SS, but its complete integrability (e.g. Lax pair, B\"acklund transformation,
Painlev\'e test) remains unclear.
A second example is a mKdV-type bilinear equation (see Table I in \cite{Hie-1987b})
\begin{subequations}\label{5.17}
\begin{align}
		&(D_{x}^{2}D_{t}+D_{y})f \cdot g=0, \label{5.17a}\\
		&D_{x}^{2} f \cdot g=0.   \label{5.17b}
\end{align}
\end{subequations}
With the transformation $w=\partial_x\ln(f/g)$, its nonlinear form reads \cite{ZLZ-CTP-2025}
\begin{equation}\label{mkdv-b-3}
		w_{y}+w_{xxt}-4w^{2}w_{t}-2w_{x}\int_{-\infty}^{x}(w^2)_{t}\,dx=0.
\end{equation}
This is an integrable equation known as the breaking-soliton mKdV equation found by Bogoyavlenskii
in 1990 (see equation (1.4) on page 47 of \cite{Bogo-1990}).
Note that Billig first realized the vertex operator corresponding to a toroidal Lie algebra
can generate tau functions and bilinear forms for the breaking-soliton KdV hierarchy \cite{Bil-1999}.
Then, the research in the language of free fermions was extended to the breaking-soliton
KdV \cite{IT-IMRN-2001}, NLS \cite{Kakei-2002} and mKdV \cite{YC-LMP-2024} equations.
The coupled system \eqref{5.17} was rederived in \cite{YC-LMP-2024}.
One more example is a sG type bilinear equation
(see \cite{Hie-1987c}, Table II):
\begin{subequations}\label{sg-3-bi}
\begin{align}
&(D_x D_y +1 ) f \cdot g =0,\label{sg-3-b}\\
&D_x D_t  (f \cdot f - g \cdot g) =0.\label{sg-3-a}
\end{align}
\end{subequations}
Introducing $w$ by $\tan \frac{w}{4}=f/g$, the above system gives rise to \cite{LZZZ-2026}
\begin{equation}\label{w_xyt}
w_{xyt} = w_x w_{xy}\cot w -  w_y w_{xt}\tan w,
\end{equation}
or
\begin{align}
\left(\frac{\theta_{\xi t}}{\sin \theta}\right)_{\xi} -\left(\frac{\theta_{\mu t}}{\sin \theta}\right)_{\mu}
+\frac{\theta_{\mu t}\theta_{\xi}-\theta_{\xi t}\theta_{\mu}}{\sin^2\theta}=0,
\end{align}
where
\begin{equation}
w(x,y,t)=2\theta(\xi=x+y, \mu=x-y,t).
\end{equation}
The later was derived by Konopelchenko and Rogers in 1991 \cite{KR-PLA-1991}
as an integrable (2+1) dimensional sG equation.
Both \eqref{5.17} and \eqref{sg-3-bi} are the bilinear equations found by Hietarinta in 1987 \cite{Hie-1987b,Hie-1987c}
by searching for admitting 3SS, while their integrability was revealed later.

In \cite{Hie-1987a,Hie-1987b,Hie-1987c,Hie-1988},
Hietarinta investigated several special types of bilinear equations
that allow having 3SS (or 2SS for the NLS type).
For the KdV-type bilinear equations,
the 3SS condition is a necessary condition for Hirota's integrability
in the sense of having NSS of the form \eqref{kdv-type-Nss}.
However, it is not generally proven to be sufficient for all types of bilinear systems,
even though it is conjectured for the bilinear equations of the KdV-type \eqref{kdvtype-bil}.
Ralph Willox described  a coupled KdV-type bilinear system (with a single tau function)
that admits a 3SS but no 4SS.
Strictly speaking, the 3SS condition (2SS for some bilinear equations, e.g.  the NLS-type)
is necessary but not universally sufficient for integrability, and  additional tests
(e.g. the Lax pair or Painlev\'e property) are often required.

\section{B\"acklund transformations}\label{sec-4}

B\"acklund transformation (BT) requires compatibility, which connects it with integrability.
Hirota developed a technique to construct bilinear BTs \cite{Hir-1974}.
In this section, we will introduce his technique and
explain the connections between  BTs and  Lax pairs.
We will also illustrate superposition formulae of BTs,
which led Hirota to finding 3SS for the sG equation (cf.\cite{Hir-2004}).

\subsection{Bilinear identities}\label{sec-4-1}

To construct bilinear BTs, first, we introduce a way to generate a large class of bilinear identities.

\begin{property}\label{pro-2-1}
The following equality holds (see \cite{Hir-1977a}, Appendix I):
\begin{align}
& e^{D_1}(e^{D_2} a\cdot b)\cdot (e^{D_3} c\cdot d)\nonumber\\
=~& e^{\frac{1}{2}(D_2-D_3)}(e^{\frac{1}{2}(D_2+D_3)+D_1} a\cdot d)\cdot (e^{\frac{1}{2}(D_2+D_3)-D_1} c\cdot b),
\label{id-gen-1}
\end{align}
where $D_i=\varepsilon_i D_x+\delta_i D_t$, $\varepsilon_i, \delta_i\in \mathbb{R}$,
and $a,b,c,d$ are  $C^{\infty}$ functions of  $(x,t)$.
\end{property}

It can be verified directly.

In the following we give some examples to show how the formula \eqref{id-gen-1} works in generating bilinear identities.

\vskip 6pt
\noindent
\textbf{Example} 4.1.1:  Taking  $D_2=D_3$ in \eqref{id-gen-1} yields
\begin{equation}
e^{D_1}(e^{D_2} a\cdot b)\cdot (e^{D_2} c\cdot d)
= (e^{D_2+D_1} a\cdot d)\cdot (e^{D_2-D_1} c\cdot b).
\label{id-e-1}
\end{equation}
Next, taking $D_1=\delta D_x,~ D_2=\varepsilon D_x$ in \eqref{id-e-1} and expanding the exponential functions of both sides, we have
\begin{align*}
 & (1+\delta D_x+\cdots) [(1+\varepsilon D_x+\cdots) a\cdot b)]\cdot [(1+\varepsilon D_x+\cdots) c\cdot d)]\\
=~& [(1+(\varepsilon+\delta) D_x+\frac{1}{2}(\varepsilon+\delta)^2 D^2_x +\cdots) a\cdot d)]\\
  &  ~~~~~ \times [(1+(\varepsilon-\delta) D_x+\frac{1}{2}(\varepsilon-\delta)^2 D^2_x +\cdots) c\cdot b)].
\label{id-gen-1}
\end{align*}
The coefficient of the term $\varepsilon \delta$ leads to a bilinear identity
\begin{equation}
D_x [(D_x a\cdot b)\cdot (cd)-(D_x c\cdot d)\cdot (ab)]
=(D^2_x  a\cdot d) bc- (D^2_x c\cdot b)ad.
\label{id-D-1}
\end{equation}

\vskip 6pt
\noindent
\textbf{Example} 4.1.2: Taking $D_2=D_3, b=c, d=a$ in \eqref{id-gen-1} yields
\begin{equation}
e^{D_1}(e^{D_2} a\cdot c)\cdot (e^{D_2} c\cdot a)
= (e^{D_2+D_1} a\cdot a)\cdot (e^{D_2-D_1} c\cdot c).
\label{id-e-2}
\end{equation}
Then we take $D_1=\varepsilon D_x,~ D_2=\delta D_t$, and from the coefficient of $\varepsilon \delta$ term in the
expansion we get
\begin{equation}
2D_x (D_t a\cdot c)\cdot (ac)=(D_xD_t  a\cdot a) c^2 - (D_xD_t c\cdot c)a^2;
\label{id-D-2}
\end{equation}
from $\varepsilon^4$ term we get
\begin{equation}
2D_x [(D_x^3 a\cdot c)\cdot (ac)-3(D_x^2 a\cdot c)\cdot (D_x a\cdot c)]=(D_x^4  a\cdot a) c^2 - (D_x^4 c\cdot c)a^2.
\label{id-D-3}
\end{equation}

In general, by specially taking $D_i$ in the formula \eqref{id-gen-1} and
comparing coefficients of $\{\varepsilon, \delta\}$, one can obtain various bilinear identities.
They play important roles in constructing bilinear BTs
and nonlinear superposition formulae based on the bilinear BTs.

\subsection{Bilinear B\"acklund transformations }\label{sec-4-2}

In this section we take the bilinear KdV equation \eqref{kdv-bil-D}, i.e.
\begin{equation}
(D_xD_t +D^4_x) f\cdot f=0,
\label{kdv-bil-D-f}
\end{equation}
as an example, to illustrate how a bilinear BT is constructed and how it works in finding solutions.

Suppose that $g$ is also a solution of \eqref{kdv-bil-D-f}, i.e.
\begin{equation}
(D_xD_t +D^4_x) g\cdot g=0.
\label{kdv-bil-D-g}
\end{equation}
Then the following relation holds
\begin{equation}
g^2(D_xD_t +D^4_x) f\cdot f- f^2 (D_xD_t +D^4_x) g\cdot g=0,
\label{BT-0-0}
\end{equation}
i.e.
\begin{equation}
[(D_xD_t f\cdot f) g^2- (D_xD_t g\cdot g) f^2]
+[(D^4_x f\cdot f)g^2- (D^4_x g\cdot g) f^2]=0.
\label{BT-0-1}
\end{equation}
Now, employing the identities \eqref{id-D-2} and \eqref{id-D-3}  (taking $a=f, c=g$), one can rewrite \eqref{BT-0-1} as
\begin{equation}
2D_x[(D_x^3+D_t )f\cdot g]\cdot (f g)
+ 6 D_x [(D_x f\cdot g)\cdot (D^2_x f\cdot g)]=0.
\label{BT-0-2}
\end{equation}
Next, introduce
\begin{subequations}\label{kdv-BT-bil}
\begin{equation}
D^2_xf\cdot g=\lambda fg,
\label{kdv-BT-bil-a}
\end{equation}
where $\lambda$ is a constant independent of $x$.
Substituting it into \eqref{BT-0-2} yields
\begin{equation*}
2D_x[(D_x^3+D_t +3\lambda D_x)f\cdot g]\cdot (f g)=0.
\end{equation*}
Then we can take
\begin{equation}
(D_x^3+D_t +3 \lambda D_x)f\cdot g =0.
\label{kdv-BT-bil-b}
\end{equation}
\end{subequations}
The two equations, \eqref{kdv-BT-bil-a} and \eqref{kdv-BT-bil-b},
compose a  coupled system which  provides a bilinear BT of the KdV equation \eqref{kdv-eq}.
In fact, from the above procedure we can see that
when \eqref{kdv-bil-D-f}, \eqref{kdv-BT-bil-a} and \eqref{kdv-BT-bil-b} hold, equation \eqref{BT-0-0} holds.
In other words,
if $f$ is a solution to \eqref{kdv-bil-D-f}, so is $g$ to \eqref{kdv-bil-D-g}, due to \eqref{BT-0-0}.
The above idea is due to Hirota \cite{Hir-1974}.

In many cases, using a bilinear BT to calculate soliton solutions
is not as convenient as directly using the original bilinear equation.
Now we show some details. First, taking $\lambda=\frac{k^2_1}{4}$ and $f=1$ (noticing that $f=1$ is a solution of \eqref{kdv-bil-D-f}),
and substituting them into \eqref{kdv-BT-bil}, one can find that $g$ needs to satisfy the following
linear equations:
\begin{align*}
& g_{xx}=\frac{k_1^2}{4}g,\\
& g_t + g_{xxx}+\frac{3}{4} k_1^2 g_x=0.
\end{align*}
As a solution we can take
\[g=g_1=e^{\frac{\eta_1}{2}}+ e^{-\frac{\eta_1}{2}},\]
where
\begin{equation}
\eta_i=k_i x-k_i^3 t+\eta_i^{(0)}, ~~ (k_i, \eta_i\in \mathbb{R}).
\label{eta-i-BT}
\end{equation}
Thus,  1SS of the KdV equation \eqref{kdv-eq} can be expressed by $u=2(\ln g_1)_{xx}$.

Next, taking $f=g_1,~\lambda= \frac{k^2_2}{4}$ and substituting them into \eqref{kdv-BT-bil}, we have
\begin{subequations}\label{kdv-BT-sol}
\begin{align}
& (D^2_x-\frac{k^2_2}{4})g\cdot (e^{\frac{\eta_1}{2}}+ e^{-\frac{\eta_1}{2}})=0,\label{kdv-BT-sol-a}\\
& (D_x^3+D_t +\frac{3}{4}k^2_2 D_x) g\cdot (e^{\frac{\eta_1}{2}}+ e^{-\frac{\eta_1}{2}})=0. \label{kdv-BT-sol-b}
\end{align}
\end{subequations}
To get a solution for $g$,
we assume $g$ to be the following form:
\[g=g_2=\alpha( e^{\frac{\eta_1+\eta_2}{2}}+ e^{-\frac{\eta_1+\eta_2}{2}})
+\beta( e^{\frac{\eta_1-\eta_2}{2}}+ e^{-\frac{\eta_1-\eta_2}{2}}),\]
where $\eta_i$ is defined as \eqref{eta-i-BT} and  $\alpha, \beta$ are undetermined constants.
Substituting the above $g$ into \eqref{kdv-BT-sol} we find: when $\alpha=k_1-k_2$ and $\beta=-(k_1+k_2)$, $g$ satisfies \eqref{kdv-BT-sol}.

Next, we can take $f=g_2,~\lambda= \frac{k^2_3}{4}$, and from \eqref{kdv-BT-bil} we solve out solution $g=g_3$.
However, it is apparent that the above procedure is not as ``mechanized" as the one we used in Sec.\ref{sec-2-2-1} to
calculate  NSS from the bilinear KdV equation \eqref{kdv-bil-D}.
In general we can successively take $\lambda= \frac{k^2_j}{4},~j=1,2,\cdots,N$ to calculate higher order solutions;
the general formula of $g$ for a generic $N$ has the following form:
\begin{equation}
g_N = \sum_{\varepsilon=\pm 1} \left[\, \prod_{1\leq j<l}^{N}\varepsilon_l(\varepsilon_j k_j-\varepsilon_l k_l)
e^{\frac{1}{2}\sum^N_{j=1}\varepsilon_j \eta_j} \right],
\label{kdv-Nss-BT}
\end{equation}
where $\eta_j$ is defined as \eqref{eta-i-BT}, the summation over $\varepsilon$ means to take all possible $\varepsilon_j=\{1,-1\}$ $(j=1,2,\cdots, N)$.

A proof that \eqref{kdv-Nss-BT} satisfies the bilinear KdV equation \eqref{kdv-bil-D}
can be found in Chapter 5 of \cite{Chen-book-2006} by making use of Wronskians.

\subsection{Deformation of bilinear BTs }\label{sec-4-3}

We have already seen that using bilinear BT \eqref{kdv-BT-bil}
to calculate soliton solutions is not as  convenient as using the original bilinear equation \eqref{kdv-bil-D}.
The reason is that $f=1,~g=1$ are not a solution pair to \eqref{kdv-BT-bil}.
To change such a situation, we try deforming bilinear BTs.

In \eqref{kdv-BT-bil} we replace $f$ and $g$ with $e^{\xi_1}f$ and $e^{\xi_2}g$, respectively, i.e.
\begin{equation}
f \to e^{\xi_1}f,~~~ g\to e^{\xi_2}g,
\end{equation}
where $\xi_i=p_ix+q_it+\xi^{(0)}_i,~ p_i,q_i,\xi^{(0)}_i\in \mathbb{R}$.
Noticing that solutions of the KdV equation can be expressed through  $u=2(\ln f)_{xx}$ or $u=2(\ln g)_{xx}$,
such a replacement does not change solutions; of course,
due to the gauge property \eqref{gauge-pro} of bilinear derivatives,
such a replacement does not change bilinear equations \eqref{kdv-bil-D-f} and \eqref{kdv-bil-D-g}  either.
Using formula \eqref{id-gauge-gen}, we have
\begin{align*}
& D_x(e^{\xi_1}f)\cdot (e^{\xi_2}g)=e^{\xi_1+\xi_2}[(p_1-p_2)fg+D_x f\cdot g],\\
& D_x^2(e^{\xi_1}f)\cdot (e^{\xi_2}g)=e^{\xi_1+\xi_2}[(p_1-p_2)^2fg+2(p_1-p_2)D_x f\cdot g+D_x^2 f\cdot g],\\
& D_x^3(e^{\xi_1}f)\cdot (e^{\xi_2}g)=e^{\xi_1+\xi_2}[(p_1-p_2)^3fg+3(p_1-p_2)^2 D_xf\cdot g\\
& ~~~~~~~~~~~~~~~~~~~~~~~~~~~~~~~~~~~~~~~~~~~~~~~~~~~~~~~~~+3(p_1-p_2)D_x^2 f\cdot g+D_x^3 f\cdot g],\\
& D_t(e^{\xi_1}f)\cdot (e^{\xi_2}g)=e^{\xi_1+\xi_2}[(q_1-q_2)fg+D_xf\cdot g].
\end{align*}
By means of them we rewrite \eqref{kdv-BT-bil} into
\begin{subequations}\label{kdv-BT-bil-D1}
\begin{align}
& [D_x^2+ 2(p_1-p_2)D_x ]f\cdot g= [\lambda -(p_1-p_2)^2]fg,\label{kdv-BT-bil-D1-a}\\
& \{D_t+D_x^3+ 3(p_1-p_2)D_x^2+3 [\lambda +(p_1-p_2)^2]D_x\} f\cdot g \nonumber \\
& ~~~~~~~~~~~~~~~~~~~~~~~~ = -[(q_1-q_2)+(p_1-p_2)^3+3\lambda (p_1-p_2)^2 ]fg. \label{kdv-BT-bil-D1-b}
\end{align}
\end{subequations}
Introducing
\[2(p_1-p_2)=\lambda',~~ \lambda=(p_1-p_2)^2,~~ (q_1-q_2)+4(p_1-p_2)^3=0,\]
so that we can simplify \eqref{kdv-BT-bil-D1-a} to
\begin{subequations}\label{kdv-BT-bil-D2}
\begin{align}
 (D_x^2+ \lambda' D_x )f \cdot g= 0,\label{kdv-BT-bil-D2-a}
\end{align}
and further, eliminating $D_x^2$ term in \eqref{kdv-BT-bil-D1-b} yields
\begin{align}
 (D_t+  D_x^3 )f \cdot g= 0.\label{kdv-BT-bil-D2-b}
\end{align}
\end{subequations}
Equations \eqref{kdv-BT-bil-D2-a} and \eqref{kdv-BT-bil-D2-b}
compose a deformed bilinear BT of the KdV equation \eqref{kdv-eq}.
Note that this deformed bilinear BT connects with a new discrete integrable equation
(cf.\cite{ChoMZ-2021}, equation (5.20)).

Compared with \eqref{kdv-BT-bil},
the deformed BT \eqref{kdv-BT-bil-D2} admits solutions $f=g=1$,
which facilitates the calculation: it allows to calculate $f$ and $g$ by  perturbation expansions.
Assuming
\begin{equation}
f=1+\sum^{\infty}_{i=1} f^{(i)} \varepsilon^i,~~~ g=1+\sum^{\infty}_{i=1} g^{(i)} \varepsilon^i
\label{fg-expand}
\end{equation}
and substituting them into \eqref{kdv-BT-bil-D2} yield
\begin{subequations}\label{fg-BT-exp}
\begin{align}
& (\partial_x^2+\lambda'\partial_x)(f^{(1)}-g^{(1)})=0,\label{fg-BT-exp-a}\\
& (\partial_x^2+\lambda'\partial_x)(f^{(2)}-g^{(2)})
   = - (D_x^2+ \lambda' D_x )f^{(1)} \cdot g^{(1)},\label{fg-BT-exp-b}\\
& (\partial_x^2+\lambda'\partial_x)(f^{(3)}-g^{(3)})
   =- (D_x^2+ \lambda' D_x )(f^{(1)} \cdot g^{(2)}+f^{(2)} \cdot g^{(1)}),\label{fg-BT-exp-c}\\
& \cdots \cdots; \nonumber
\\
& (\partial_t+\partial_x^3)(f^{(1)}-g^{(1)})=0,\label{fg-BT-exp-d}\\
& (\partial_t+\partial_x^3)(f^{(2)}-g^{(2)})
   = - (D_t+  D_x^3 )f^{(1)} \cdot g^{(1)},\label{fg-BT-exp-e}\\
& (\partial_t+\partial_x^3)(f^{(3)}-g^{(3)})
   =- (D_t+  D_x^3 )(f^{(1)} \cdot g^{(2)}+f^{(2)} \cdot g^{(1)}),\label{fg-BT-exp-f}\\
& \cdots \cdots. \nonumber
\end{align}
\end{subequations}
To get solutions, first, we can take $g^{(j)}=0,~ (j\geq 1)$, which corresponds to a zero solution
to the KdV equation \eqref{kdv-eq}.
Taking $\lambda'=-k_1$, from \eqref{fg-BT-exp-a} and \eqref{fg-BT-exp-b} we find a solution
\begin{equation}
f^{(1)}=e^{\eta_1},~~ \eta_i=k_i x-k_i^3 t+\eta^{(0)}_i,~ ~ (k_i,\eta^{(0)}_i\in \mathbb{R}),
\label{f1-eta-BT}
\end{equation}
where note that  $\eta_i$ is as same  as \eqref{eta-i-kdv}.
For those $f^{(j)},~(j\geq 2)$ we can trivially take them to be 0.
Thus, a 1SS is obtained by $u=2[\ln(1+f^{(1)})]_{xx}$.

Next, still considering $g$ as a new seed solution and taking $g^{(1)}=e^{\eta_1}$, $g^{(j)}=0,~ (j\geq 2)$,
in the following we will see that the deformed bilinear BT does bring some new aspects.

\vskip 7pt
\noindent
{Case 1}: Taking $\lambda'=-k_2$ and assuming
\begin{equation*}
f^{(1)}=a e^{\eta_1}+be^{\eta_2},
\end{equation*}
where $\eta_i$ is defined in \eqref{f1-eta-BT},
from \eqref{fg-BT-exp-a} and \eqref{fg-BT-exp-b}
we find  $a=-\frac{k_1+k_2}{k_1-k_2}$ and $b$ can be an arbitrary constant.
Next, from \eqref{fg-BT-exp-b} and \eqref{fg-BT-exp-e}   we find
\begin{equation*}
f^{(2)}=-\frac{b}{a}  e^{\eta_1+\eta_2}.
\end{equation*}
For $f^{(j)},~ (j\geq 3)$, we can take them to be 0.
Thus, a 2SS is obtained by $u=2(\ln f)_{xx}$, where
\begin{equation}
f= 1+a  e^{\eta_1}+be^{\eta_2}-\frac{b}{a}  e^{\eta_1+\eta_2},~~ a=-\frac{k_1+k_2}{k_1-k_2}.
\label{2ss-f-DBT}
\end{equation}

\vskip 7pt
\noindent
{Case 2}: Still take $\lambda'=-k_1$ (as in getting 1SS). In this case we have
\begin{equation*}
f^{(1)}=\zeta_1 e^{\eta_1},~~~ f^{(2)}= e^{2\eta_1},
\end{equation*}
where $\eta_1$ is defined in \eqref{f1-eta-BT}, $\zeta_1=2k_1(x-3k_1^2 t)+\zeta^{(0)},~ \zeta^{(0)}\in \mathbb{R}$,
and $f^{(j)}=0,~ (j\geq 3)$.
The obtained solution is
$u=2(\ln f)_{xx}$, where
\begin{equation}
f= 1-\zeta_1 e^{\eta_1}+ e^{2\eta_1}.
\label{2ps-f-DBT}
\end{equation}
The solution is known as a double-pole solution (corresponding to the case $k_2\to k_1$).

\vskip 7pt
The above are two examples. If in \eqref{2ss-f-DBT} we scale $a,b$ to be 1 by redefining the constants $\eta^{(0)}_i$, $f$ can be written as
\begin{equation*}
f=1+ e^{\eta_1} + e^{\eta_2}+ A_{12}\, e^{\eta_1+\eta_2},~~ A_{12}=\Bigl(\frac{k_1-k_2}{k_1+k_2}\Bigr)^2,
\end{equation*}
which is the standard Hirota's form for 2SS of the KdV equation.
If in \eqref{2ss-f-DBT} we take $b=-a$ and take the limit $k_2\to k_1$, we can obtain \eqref{2ps-f-DBT}.
Double-pole solutions correspond to the case that the transmission coefficient $T(k)$
in the Inverse Scattering Transform (IST) has
a double pole (note that simple poles lead to solitons).
Double-pole solutions can be obtained from many different ways,
for example,  from superposition formula \cite{Lamb-1971},
IST \cite{ZakS-1972,WadO-1982}, Darboux transformation \cite{Mat-PLA-1992-b,Mat-PLA-1992-b},
Wronskian technique \cite{SirHR-1988,Zha-2006},
bilinear method \cite{CheZD-2002,DenC-2001},
Cauchy matrix approach \cite{XuZZ-2014,ZhaZ-SAPM-2013}, etc.

The deformed bilinear BTs have advantages in allowing more freedom in calculations and
obtaining more kinds of solutions.
If we keep taking  $\lambda'=0$ in  each step of the transformations, we may obtain high order rational solutions.
Besides, rational solutions can also be derived directly from bilinear equations  \cite{Hir-1980},
or from superposition formulae \cite{AblS-1978},
or from determinantal approach (e.g. \cite{Zha-2006}).
For more examples on the deformed bilinear BTs, one can refer to
\cite{ChenBC-2004,BiCC-2014}, etc.

\subsection{BTs and Lax pairs}\label{sec-4-4}

A bilinear BT appears as an system of equations and actually requires compatibility among these equations.
This fact brings BTs and Lax pairs together.

For the KdV equation \eqref{kdv-eq}, i.e.
\begin{equation}
u_t+6uu_x+u_{xxx}=0,
\label{kdv-eq-2}
\end{equation}
its Lax pair reads  (e.g.\cite{AblS-1981})
\begin{subequations}\label{kdv-lax}
\begin{align}
&\phi_{xx}+u\phi=\lambda \phi, \label{kdv-lax-a}
\\
& \phi_{t}=\phi_{xxx}+3(\lambda+u)\phi_x,
\label{kdv-lax-b}
\end{align}
\end{subequations}
where $\phi$ serves as the eigenfunction and $\lambda$ as the spectral parameter.
Noticing that in the bilinear BT \eqref{kdv-BT-bil}, $f$ and $g$ correspond to two solutions of the KdV equation:
$u=2(\ln f)_{xx}$ and $\t u=2(\ln g)_{xx}$,
we put them together and we have
\begin{equation}
\t u =u+2(\ln\phi)_{xx},~~~\phi=\frac{g}{f},
\end{equation}
which is as same as the relation of two solutions of the KdV equation obtained from the Darboux transformation \cite{MatS-1991}.

In the bilinear BT \eqref{kdv-BT-bil}, taking
\begin{equation}
u=2(\ln f )_{xx},~~~\phi=\frac{g}{f}
\label{ufgphi}
\end{equation}
and rewriting \eqref{kdv-BT-bil} in terms of  $u$ and $\phi$ directly yield the Lax pair \eqref{kdv-lax}.
In the reverse direction,   substituting \eqref{ufgphi} into the Lax pair \eqref{kdv-lax}, one can find
the bilinear BT \eqref{kdv-BT-bil}.

In general, once one has a bilinear equation, one can construct its bilinear BT,
and with suitable substitutions from the bilinear  BT one can obtain a Lax pair for
the related nonlinear equation.
Hirota extended this idea to the discrete case and found bilinear BTs and Lax pairs
for new discrete integrable equations,
e.g. \cite{Hir-1977a}.

There is a nonlinear BT which was presented by  Wahlquist and Estabrook in 1973  \cite{WahE-1973}:
\begin{subequations}\label{kdv-BT-non-1}
\begin{align}
& (\t w+w)_x=2\lambda-\frac{1}{2}(\t w -w)^2, \label{kdv-BT-non-1-a}\\
& (\t w-w)_t=\frac{1}{2}[(\t w-w)^3]_x-6\lambda (\t w-w)_{x}-(\t w-w)_{xxx}, \label{kdv-BT-non-1-b}
\end{align}
\end{subequations}
where  $w$ satisfies the potential KdV equation \eqref{kdv-eq-poten}, i.e. $u=w_x$ satisfying the KdV equation \eqref{kdv-eq}.
Once $\t w$ is solved out from \eqref{kdv-BT-non-1},  $\t u=\t w_x$ provides a new solution to the KdV equation.

The BT \eqref{kdv-BT-non-1} can be bilinearized  \cite{Hir-1980}.
Taking
\begin{equation}
w=2(\ln f )_{x},~~\t w =2(\ln g )_{x},
\label{fgwtw}
\end{equation}
and substituting them into \eqref{kdv-BT-non-1}, one can obtain the bilinear BT \eqref{kdv-BT-bil}.

With \eqref{fgwtw} and $\phi=g/f$, which yields $\t w-w=2(\ln \phi)_x$, from \eqref{kdv-BT-non-1}
one can directly obtain the Lax pair \eqref{kdv-lax} of the KdV equation.

\vskip 8pt
\textit{Note: Lax pair, bilinear BT and nonlinear BT
provide different forms for the same thing.
Around the year 1974,
many researchers considered relations between BTs and Lax pairs,
see, e.g.\cite{Che-1974,Hir-1974,Lamb-1974,WadSK-1975,Miu-1976}, etc.
}

\subsection{BTs and superposition formulae}\label{sec-4-5}

\subsubsection{Nonlinear BTs and superposition formulae}\label{sec-4-5-1}

How can we use a BT to generate solutions? For the nonlinear BT \eqref{kdv-BT-non-1},
taking $w=0$  and $\lambda=k_1^2$, we have
\begin{subequations}\label{kdv-BT-non-2}
\begin{align}
& \t w_x=2k_1^2-\frac{1}{2}\t w^2, \label{kdv-BT-non-2-a}
\\
& \t w_t=\frac{1}{2}(\t w^3)_x-6k_1^2\t w_{x}-\t w_{xxx}.\label{kdv-BT-non-2-b}
\end{align}
\end{subequations}
From \eqref{kdv-BT-non-2-a} we can assume
\[\t w=2k_1 \tanh (kx+c(t)),\]
where $c(t)$ is undetermined. Substituting it into \eqref{kdv-BT-non-2-b} we find
$c(t)=4k^3_1 t+\eta^{(0)}_1$, and thus,
\begin{equation}
\t w=2k_1 \tanh \eta_1,~~~ \eta_i=k_i x+4k_i^3 t +\eta^{(0)}_i,~~ (k_i, \eta_i^{(0)}\in \mathbb{R})
\label{tw}
\end{equation}
provides a solution to the potential KdV equation \eqref{kdv-eq-poten} and $u=\t w_x$ is a 1SS of the KdV equation \eqref{kdv-eq}.

Next, taking $w$ to be \eqref{tw} and substituting it into the BT \eqref{kdv-BT-non-1} to solve $\t w$,
in principle, solving  BT \eqref{kdv-BT-non-1} with $\lambda=k_2^2$,
one will get a 2SS for the KdV equation.
Apparently, this becomes complicated in calculation and is not as convenient as using
the bilinear BT.
However, once we get the first few solutions from the B\"acklund transformation \eqref{kdv-BT-non-1},
a recursive relation of these solutions can be built, from which  new solutions can easily be generated.
Such a recursive relation is referred to as a nonlinear superposition formula of solutions.
In the next, let us derive the nonlinear superposition formula for the  KdV equation.
We only use equation \eqref{kdv-BT-non-1-a}
in the BT.

We start from \eqref{kdv-BT-non-1-a} with a seed solution $w$,
denote $\t w=w_1$ when taking $\lambda=\lambda_1$
and denote $\t w=w_2$ when taking $\lambda=\lambda_2$, i.e.
\begin{subequations}\label{kdv-BT-non-1-aa}
\begin{align}
& (w_1+w)_x=2\lambda_1-\frac{1}{2}(w_1 -w)^2, \label{kdv-BT-non-1-aa-1}\\
& (w_2+w)_x=2\lambda_2-\frac{1}{2}(w_2 -w)^2. \label{kdv-BT-non-1-aa-2}
\end{align}
\end{subequations}
Then, using \eqref{kdv-BT-non-1-a} with a new seed $w=w_1$ and $\lambda=\lambda_2$,
and denoting the obtained solution  by $\t w=w_{12}$, we have
\begin{subequations}\label{kdv-BT-non-1-ab}
\begin{align}
(w_{12}+w_1)_x=2\lambda_2-\frac{1}{2}(w_{12} -w_1)^2; \label{kdv-BT-non-1-ab-1}
\end{align}
and meanwhile, taking $w=w_2, ~\lambda=\lambda_1$ in \eqref{kdv-BT-non-1-a} and denoting $\t w=w_{21}$, yield
\begin{align}
(w_{21}+w_2)_x=2\lambda_1-\frac{1}{2}(w_{21} -w_2)^2. \label{kdv-BT-non-1-ab-2}
\end{align}
\end{subequations}
The above procedure can be described as in Figure \ref{fig-2-1}.
\begin{figure}[!ht]
\vskip 0.5cm
\begin{center}
    \begin{tikzpicture}
      \draw (-2, 0) node [left]{$w$} -- node [above]{$\lambda_1$} (0, 1);
       \draw (0.6, 1) node [left]{$w_1$} -- node [above]{$\lambda_2$} (2.6, 0) node [right]{$w_{12}=w_{21}?$};
       \draw (-2, 0) node [left]{$~$}-- node [below]{$\lambda_2$} (0, -1) node [right]{$w_2$};
       \draw (0.6, -1) node [above]{$~$} --  node [below]{$\lambda_1$} (2.6, 0) node [right]{$~$};
    \end{tikzpicture}
\end{center}
\vskip -0.5cm
\caption{Permutability  property of solutions based on  B\"acklund transformation.}\label{fig-2-1}
\end{figure}
The question is whether $w_{12}$ and $w_{21}$ are same.
To answer this question, we eliminate $w_{1,x}$ from \eqref{kdv-BT-non-1-aa-1} and \eqref{kdv-BT-non-1-ab-1}
and we arrive at
\[w_1=\frac{1}{2}(w_{12}+w)+\frac{2(\lambda_1-\lambda_2)}{w_{12}-w}+[\ln(w_{12}-w)]_x.\]
Substituting it into \eqref{kdv-BT-non-1-ab-1} we find
\begin{align}
\lambda_1+\lambda_2 = & (w_{12}+w)_x + [\ln(w_{12}-w)]_{xx}+\frac{1}{2}[\ln(w_{12}-w)]_x^2\nonumber\\
                      & ~~~ + \frac{1}{8}(w_{12}-w)^2 + \frac{2(\lambda_1-\lambda_2)^2}{(w_{12}-w)^2}.\label{w12-ode}
\end{align}
This can be viewed as an ordinary differential equation (ODE) for both $w_{12}$ and $w_{21}$,
because it is invariant if switching $\lambda_1$ and $\lambda_2$ in the equation.
Thus, once we impose a same initial condition on $w_{12}$ and $w_{21}$,
we will get  $w_{12}=w_{21}$.

To obtain a neat form of the recursive relation for the solutions, from \eqref{kdv-BT-non-1-aa-1} and \eqref{kdv-BT-non-1-aa-2} we have
\[(w_1-w_2)_x=2(\lambda_1-\lambda_2)-\frac{1}{2}(w_1-w_2)(w_1+w_2-2w);\]
and from \eqref{kdv-BT-non-1-ab-1} and \eqref{kdv-BT-non-1-ab-2} we have (noticing that $w_{12}=w_{21}$)
\[(w_1-w_2)_x=-2(\lambda_1-\lambda_2)-\frac{1}{2}(w_1-w_2)(w_1+w_2-2w_{12}).\]
Eliminating the derivative terms from them yields a purely algebraic expression:
\begin{equation}
4(\lambda_1-\lambda_2)=(w_1-w_2)(w_{12}-w),
\label{kdv-nsf}
\end{equation}
which is referred to as the nonlinear superposition formula of solutions of the (potential) KdV equation,
also known as the Bianchi identity\footnote{
It is Bianchi who first derived a nonlinear superposition formula of solutions of the sG equation
and first proved permutation property of solutions \cite{Bia-1899}.}, also called the discrete potential KdV equation \cite{NijCW-1985,NijC-1995}.
As a discrete equation, \eqref{kdv-nsf} is usually written as
\begin{equation}
(w_{n+1,m}-w_{n,m+1})(w_{n,m}-w_{n+1,m+1})=q^2-p^2,
\label{kdv-dis}
\end{equation}
in which $w_{n,m}=w(n,m)$, $(n,m)\in \mathbb{Z}^2$,
$p$ and $q$ are  spacing parameters of $n$- and $m$- direction, respectively.

It is remarkable that the derivation of the nonlinear superposition formula is only based on $x$-part in the
BT, which implies the superposition formula is valid for the solutions of the whole KdV hierarchy.
The formula is so simple and neat.
In addition to the KdV equation, some equations such as  the  mKdV equation, sG equation,
mKdV-sG equation  and so on,
have superposition formulae in simple forms
(e.g. \cite{Lamb-1971,Lamb-1974,Che-1974,Wad-1974,KonS-1975,Kon-1982}).
In fact, the mKdV equation, sG equation and mKdV-sG equation share a same superposition formula,
cf.\cite{ChoWZ-2023}.
Besides, when these nonlinear superposition formulae are treated as 2-dimensional discrete equations,
they exhibit beautiful 3-dimensional consistency \cite{HieJN-2016}.

\textit{Note: Here are some remarks on the B\"acklund transformation and Bianchi identity.
For more details one can refer to the book \cite{RogS-2002} by Rogers and Schief
and the paper \cite{PruS-1998} by  Prus and Sym.
Historically, Bianchi (in his habilitation thesis in 1879)
introduced a purely geometric construction for pseudospherical surfaces.
He  reformulated the construction
in mathematical terms as a transformation $\mathbb{B}$  (without any parameter).
Later, B\"acklund (1883) published  his celebrated transformation $\mathbb{B}_{\sigma}$
($\sigma$ is a parameter):
\begin{subequations}
\label{BT-sigma}
\begin{align}
& \phi_{u}+\theta_{v}=(\sin \phi \cos \theta+\sin \sigma \cos \phi \sin \theta)/\cos\sigma,\\
& \phi_{v}+\theta_{u}=(-\cos \phi \sin \theta+ \sin \sigma \sin \phi \cos \theta)/\cos\sigma,
\end{align}
\end{subequations}
where $\theta=\theta(u,v),~\phi=\phi(u,v)$ and  $u,v$ are independent variables.
It
extends Bianchi's construction and  allows
the iterative construction of pseudospherical surfaces.
Bianchi (1885) showed the above transformation to be associated with an elegant invariance of the sG equation:
\[\theta_{uu}-\theta_{vv}=\sin \theta \cos \theta,\]
if $\theta$ is a solution and $\phi$ satisfies \eqref{BT-sigma},
then $\phi$ is a solution of the sG equation
\(\phi_{uu}-\phi_{vv}=\sin \phi \cos \phi.\)
Before Bianchi, Darboux (1883) already found the above  invariance for a special case $\sigma=0$.
Later,
Bianchi (1892) demonstrated that  B\"acklund's transformation $\mathbb{B}_{\sigma}$
admits a commutativity  property
$\mathbb{B}_{\sigma_1}\mathbb{B}_{\sigma_2}=\mathbb{B}_{\sigma_2}\mathbb{B}_{\sigma_1}$,
which is termed as the ``Permutability Theorem''.
Based on the permutability, he was able to  obtained a nonlinear superposition formula (Bianchi identity)
of B\"acklund's transformation.
}

\subsubsection{Bilinear BTs and superposition formulae}\label{sec-4-5-2}

Using bilinear BTs one can derive bilinear superposition formulas for bilinear equations,
which was first found by Hirota and Satsuma in 1978 \cite{HirS-1978}.
Still we take the KdV equation as an example to show the construction.

We start from the $x$-part of the bilinear BT \eqref{kdv-BT-bil} of the KdV equation, i.e.
\begin{equation}
D^2_x f\cdot \t f=\lambda f \t f,
\label{kdv-BT-bil-x}
\end{equation}
where for convenience we have replaced $g$ with $\t f$.
We will construct a relation as shown in Figure \ref{fig-2-2}.
\begin{figure}[!ht]
\vskip 0.5cm
\begin{center}
    \begin{tikzpicture}
      \draw (-2, 0) node [left]{$f$} -- node [above]{$\lambda_1$} (0, 1);
       \draw (0.6, 1) node [left]{$f_1$} -- node [above]{$\lambda_2$} (2.6, 0) node [right]{$f_{12}=f_{21}$};
       \draw (-2, 0) node [left]{$~$}-- node [below]{$\lambda_2$} (0, -1) node [right]{$f_2$};
       \draw (0.6, -1) node [above]{$~$} --  node [below]{$\lambda_1$} (2.6, 0) node [right]{$~$};
    \end{tikzpicture}
\end{center}
\vskip -0.5cm
\caption{Permutability property of  bilinear B\"acklund transformations.}\label{fig-2-2}
\end{figure}
Of course, first we need to investigate the possibility $f_{12}=f_{21}$. Similar to Sec.\ref{sec-4-5-1},
from \eqref{kdv-BT-bil-x} we have
\begin{subequations}\label{f12bt}
\begin{align}
& (D_x^2-\lambda_1)f \cdot f_{1}=0,\label{f12bt-a}\\
& (D_x^2-\lambda_2)f \cdot f_{2}=0,\label{f12bt-b}\\
& (D_x^2-\lambda_2)f_1 \cdot f_{12}=0,\label{f12bt-c}\\
& (D_x^2-\lambda_1)f_2 \cdot f_{21}=0.\label{f12bt-d}
\end{align}
\end{subequations}
Introducing $w=2(\ln f)_x$, we may write \eqref{f12bt-a} as
\[(w+w_1)_x=2\lambda_1-\frac{1}{2}(w-w_1)^2,\]
which is the same as \eqref{kdv-BT-non-1-aa-1}.
Equations \eqref{f12bt-b}, \eqref{f12bt-c} and \eqref{f12bt-d} can also be written as  \eqref{kdv-BT-non-1-aa-2}, \eqref{kdv-BT-non-1-ab-1} and \eqref{kdv-BT-non-1-ab-2} in terms of $w$.\footnote{This shows
the coincidence between bilinear and nonlinear BTs and Lax pairs.}
Thus, both $w_{12}=2(\ln f_{12})_{x}$ and  $w_{21}=2(\ln f_{21})_{x}$ will satisfy the same ODE \eqref{w12-ode},
and $f_{12}$ and $f_{21}$ can be same provided they enjoy same initial conditions.

Now, $f_2f_{12}\times \eqref{f12bt-a} - ff_1\times \eqref{f12bt-d}$ yields
\begin{equation}
(D^2_x f\cdot f_{1})f_2f_{12}-(D^2_x f_2\cdot f_{12})f f_{1} =0;
\label{f12D1}
\end{equation}
meanwhile, using bilinear identity \eqref{id-D-1} we have
\begin{equation}
(D^2_x f\cdot f_{1})f_2f_{12}-(D^2_x f_2\cdot f_{12})f f_{1} =D_x[(D_x f\cdot f_{12})\cdot(f_1 f_2)-D_x(f_2\cdot f_1)\cdot (ff_{12})].
\label{f12D2}
\end{equation}
By a comparison we immediately find
\begin{equation*}
D_x[(D_x f\cdot f_{12})\cdot(f_1 f_2)-(D_x f_2\cdot f_1)\cdot (ff_{12})]=0.
\end{equation*}
Switching indices: $1\leftrightarrow 2$, yields
\begin{equation*}
D_x[(D_x f\cdot f_{12})\cdot(f_1 f_2)-(D_x f_1\cdot f_2)\cdot (ff_{12})]=0.
\end{equation*}
For the above two equations, adding and subtracting each other yield, respectively
\begin{align*}
D_x(D_x f\cdot f_{12})\cdot(f_1 f_2)=0,\\
D_x(D_x f_1\cdot f_2)\cdot (ff_{12})=0.
\end{align*}
Then, noticing the property $D_x g\cdot g=0$ we can take
\begin{subequations}\label{kdv-nsf-bil}
\begin{align}
D_x f\cdot f_{12}=\alpha f_1 f_2,\\
D_x f_2\cdot f_1 =\beta ff_{12},
\end{align}
\end{subequations}
where $\alpha, \beta\in \mathbb{R}$.
These two equations together compose a bilinear superposition formula for the bilinear
KdV equation \eqref{kdv-bil-D}.

One can derive the  superposition formula \eqref{kdv-nsf} from \eqref{kdv-nsf-bil}.
In fact, multiplying each other in \eqref{kdv-nsf-bil} yields
\[ (D_x f\cdot f_{12})\times (D_x f_2\cdot f_1)=\alpha \beta ff_1 f_2f_{12}.
\]
Then introducing $w=2(\ln f)_x$, we arrive at
\begin{equation}
(w-w_{12})(w_1-w_2)=-4 \alpha \beta,
\end{equation}
which is the nonlinear superposition formula \eqref{kdv-nsf}.
Thus, two forms of the superposition formulas are unified.

\section{Vertex operators and tau functions} \label{sec-5}

Vertex operator plays a key role in the theory developed by the Kyoto group \cite{MiwJD-2000}.
In this section, we only introduce the transformations between $\tau$ functions
manipulated by vertex operators.

\subsection{Vertex operator of the KdV equation}\label{sec-5-1}

The function $f$ defined by \eqref{Nss-f} corresponds to the NSS of the KdV equation \eqref{kdv-eq}.
It is called the $\tau$ function of the KdV equation, denoted by $\tau_N$.
The bilinear BT \eqref{kdv-BT-bil} provides a transformation between $\tau_N$ and $\tau_{N+1}$,
which can be proved in an apparent way when the $\tau$ functions are expressed in Wronskians
\cite{NimF-JPA-1984-BT}.
When $\tau_N$ is presented in Hirota's form \eqref{Nss-f},
the transformation can be described by means  an operator $X(k)$ (called vertex operator),
which provides a transformation directly by:
\begin{equation}
\tau_{N+1}=e^{c_{N+1}X(k_{N+1})} \tau_N.
\end{equation}
We will explain such an operator in this subsection.

We rewrite the KdV equation (by $t\to -4t$) as
\begin{equation}
4u_t-6uu_x-u_{xxx}=0.
\label{kdv-eq-sato}
\end{equation}
Then, under the transformation
\begin{equation}
u=2(\ln \tau)_{xx},
\label{tran-kdv-tau}
\end{equation}
the bilinear KdV equation is
\begin{equation}
(4D_xD_t -D^4_x) \tau\cdot \tau=0.
\label{kdv-bil-D-tau}
\end{equation}
For convenience we introduce $t_1=x,~ t_3=t$, with which the above bilinear equation reads
\begin{equation}
(4D_1D_3 -D^4_1) \tau\cdot \tau=0.
\label{kdv-bil-D-tau-13}
\end{equation}
Its NSS is given by
\begin{subequations}\label{Nss-tau}
\begin{equation}
\tau_N =\sum_{\mu=0,1} \mathrm{exp}\left(2 \sum^{N}_{j=1} \mu_j (\zeta_j+\zeta^{(0)}_j)+\sum^N_{1\leq i<j}\mu_i\mu_j a_{ij}\right),
\label{Nss-tau-a}
\end{equation}
where
\begin{equation}
\zeta_j=\sum_{i=0}^{\infty} k_j^{2i+1}t_{2i+1}, ~~ e^{a_{ij}}=A_{ij}=\Bigl(\frac{k_i-k_j}{k_i+k_j}\Bigr)^2,
\label{Nss-tau-b}
\end{equation}
\end{subequations}
$\zeta^{(0)}_j\in \mathbb{R}$   and the summation over $\mu$ is as same as in \eqref{Nss-f}.
Here we note that, to employ those notations in Sato's theory, we use infinite coordinates $(t_1,t_3,t_5,\cdots)$.
In fact, for the KdV equation we can treat $(t_5,t_7,\cdots)$ (which do not appear in the equation) as parameters.
This is guaranteed by the infinitely many isospectral symmetries of the KdV equation (see \cite{MiwJD-2000}).

The above $\tau_N$ can be rewritten as  \cite{DatKM-1981}
\begin{equation}
\tau_N=\sum_{J\subset I}\Biggl[ \Biggl(\prod_{i\in J}c_i\Biggr)\Biggl(\prod_{\begin{smallmatrix}i,j\in J\\i<j\end{smallmatrix}}A_{ij}\Biggr)
\mathrm{exp}\Biggl(2\sum_{i\in J} \zeta_i\Biggr)\Biggr],
\label{Nss-tau-sato}
\end{equation}
where  $c_i\in \mathbb{R}$, $I$ stands for the set $I=\{1,2,\cdots,N\}$, $J$ is a subset of $I$,
and summation over $J\subset I$ means taking all possible subsets of $I$.
In the above expression, $J=\varnothing$ corresponds to ``$1$",
$J=\{i\}$ corresponds to  $c_i e^{2\zeta_i}$,
and $J=\{1, 2\}$ corresponds to $c_1c_2 A_{12} e^{2(\zeta_1+\zeta_2)}$, $\cdots\cdots$.
Obviously, $c_i$ takes the place of $e^{2\zeta^{(0)}_i}$ in  \eqref{Nss-tau}.
When $N=2$, we have
\[\tau_2=1+c_1e^{2\zeta_1} + c_2e^{2\zeta_2}+ c_1 c_2 A_{12}\, e^{2(\zeta_1+\zeta_2)},
\]
which is the same as \eqref{2ss-f} $(\varepsilon=1)$.

For the transformations between $\tau$ functions based on vertex operators, there is the following theorem \cite{DatKM-1981,MiwJD-2000}.

\begin{theorem}\label{thm-2-6-1}
For the $\tau$ function defined by \eqref{Nss-tau-sato}, there is
\begin{equation}
\tau_{N+1}=e^{c_{N+1}X(k_{N+1})} \tau_N,
\label{tau-X}
\end{equation}
where
\begin{subequations}\label{X-kdv}
\begin{align}
& X(k)=e^{2\zeta(\mathbf{t},k)} \, e^{-2\zeta(\t \partial,\, k^{-1})},\label{X-kdv-a}\\
& \zeta(\mathbf{t},k)=\sum^{\infty}_{j=0}k^{2j+1} t_{2j+1},~~~ \mathbf{t}=(t_1, t_3, t_5, \cdots),\label{X-kdv-b}\\
& \t{\partial}=\Bigl(\partial_{1},\, \frac{\partial_{3}}{3},\, \frac{\partial_{5}}{5},\, \cdots \Bigr),~~~ \partial_{j}=\partial_{t_j}. \label{X-kdv-c}
\end{align}
\end{subequations}
\eqref{X-kdv-a} is called a vertex operator.
\end{theorem}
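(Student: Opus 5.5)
The plan is to compute the action of $X(k)$ on a single exponential $e^{2\zeta(\mathbf{t},k_j)}$ and on products of such exponentials, show that $X(k)^2$ kills every term appearing in $\tau_N$, and then read off $e^{cX}=1+cX$ on $\tau_N$.

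\emph{Step 1: the shift and the basic commutation rule.}
The operator $e^{-2\zeta(\t\partial,k^{-1})}$ is a shift operator: it maps $F(\mathbf{t})$ to $F(\mathbf{t}-2[k^{-1}])$, where $[k^{-1}]=(k^{-1},k^{-3}/3,k^{-5}/5,\cdots)$. Under this shift,
\[
\zeta(\mathbf{t}-2[k^{-1}],k_j)=\zeta(\mathbf{t},k_j)-2\sum_{i\ge0}\frac{1}{2i+1}\Bigl(\frac{k_j}{k}\Bigr)^{2i+1}
=\zeta(\mathbf{t},k_j)-\ln\frac{k+k_j}{k-k_j}.
\]
This uses $\sum_{i\ge 0} x^{2i+1}/(2i+1)=\tfrac12\ln\frac{1+x}{1-x}$, valid formally or for $|k_j/k|<1$ and then extended by analytic continuation. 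Hence
\[
e^{-2\zeta(\t\partial,k^{-1})}\,e^{2\sum_{i\in J}\zeta(\mathbf{t},k_i)}=\prod_{i\in J}\Bigl(\frac{k-k_i}{k+k_i}\Bigr)^2\,e^{2\sum_{i\in J}\zeta(\mathbf{t},k_i)},
\]
and therefore
\[
X(k)\,e^{2\sum_{i\in J}\zeta_i}=\prod_{i\in J}A(k,k_i)\;e^{2\zeta(\mathbf{t},k)+2\sum_{i\in J}\zeta_i},
\qquad A(k,k_i)=\Bigl(\frac{k-k_i}{k+k_i}\Bigr)^2 .
\]

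\emph{Step 2: nilpotency, which I expect to be the main obstacle.}
Applying the rule of Step 1 twice with the same $k$, the second application produces the factor $A(k,k)=0$, coming from the self-factor of the exponential $e^{2\zeta(\mathbf{t},k)}$ that the first application created. So $X(k)^2$ annihilates every term $e^{2\sum_{i\in J}\zeta_i}$, provided $k\neq k_i$ for $i\in J$ so that the factors are finite.

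The delicate point is making sense of the infinite-order shift. One must either treat the $\ln$ series as an identity of formal power series in $k^{-1}$, or assume $|k_i|<|k|$ and continue analytically in the parameters. Since $\tau_N$ is a finite sum of exponentials in infinitely many variables, each exponential is an eigenfunction of the shift, and this resolves the issue. I would present it that way, taking care that each term is exactly $e^{2\zeta}$-linear.

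\emph{Step 3: conclusion.}
By Step 2, $e^{c_{N+1}X(k_{N+1})}\tau_N=(1+c_{N+1}X(k_{N+1}))\tau_N$. For each subset $J\subset I$, the term $X(k_{N+1})$ applied to the $J$-summand of \eqref{Nss-tau-sato} gives
\[
c_{N+1}\Bigl(\prod_{i\in J}c_i\Bigr)\Bigl(\prod_{\begin{smallmatrix}i,j\in J\\i<j\end{smallmatrix}}A_{ij}\Bigr)\Bigl(\prod_{i\in J}A_{i,N+1}\Bigr)\exp\Bigl(2\sum_{i\in J\cup\{N+1\}}\zeta_i\Bigr),
\]
using $A(k_{N+1},k_i)=A_{i,N+1}$. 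This is precisely the summand of $\tau_{N+1}$ indexed by $J\cup\{N+1\}$. Together with the untouched summands of $\tau_N$, which correspond to the subsets of $\{1,\dots,N+1\}$ not containing $N+1$, this exhausts the sum over subsets for $\tau_{N+1}$ and gives \eqref{tau-X}.

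Iterating from $\tau_0=1$ then also gives $\tau_N=\prod_{j=1}^{N} e^{c_jX(k_j)}\cdot 1$. For this to be consistent, the operators $e^{c_jX(k_j)}$ must commute, and this follows from the symmetry $A_{ij}=A_{ji}$.
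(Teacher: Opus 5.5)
Your proposal is correct and follows essentially the paper's route. In both, the shift property of $e^{-2\zeta(\t\partial,k^{-1})}$ combined with the logarithmic series (Lemmas~\ref{lem-2-6-1}--\ref{lem-2-6-3}) produces the factors $A_{ij}$, and $A(k,k)=0$ gives the nilpotency that truncates $e^{cX(k)}$ to $1+cX(k)$.

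The main difference is organizational. You apply $X(k_{N+1})$ termwise to the summands of $\tau_N$, whereas the paper first proves the operator identities of Lemmas~\ref{lem-2-6-4}--\ref{lem-2-6-5} and then expands $(1+c_NX(k_N))\cdots(1+c_1X(k_1))\circ 1$.

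There is also a small slip: finiteness of $A(k,k_i)$ requires $k\neq -k_i$, not $k\neq k_i$. The case $k=k_i$ merely gives a vanishing factor, which is exactly the mechanism behind your nilpotency step.
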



We prove this theorem through some lemmas.

\begin{lem}\label{lem-2-6-1}
$\forall\, a,k\in \mathbb{R}$, there is
\begin{equation}
 e^{a\zeta(\t \partial,\, k^{-1})} f(\mathbf{t})=f(\mathbf{t}+a\varepsilon(k)),
\end{equation}
where
\begin{equation*}
\varepsilon(k)=\Bigl(\frac{1}{k},\, \frac{1}{3k^3},\, \frac{1}{5k^5},\, \cdots\Bigl).
\end{equation*}
\end{lem}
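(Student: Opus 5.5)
The plan is to unpack the exponent first and then apply Taylor's theorem in several variables (the multivariate shift formula). By the definitions \eqref{X-kdv-b} and \eqref{X-kdv-c}, replacing $\mathbf{t}$ by $\t\partial$ and $k$ by $k^{-1}$ gives
\[
a\,\zeta(\t\partial,k^{-1})=a\sum_{j=0}^{\infty}k^{-(2j+1)}\frac{\partial_{2j+1}}{2j+1}
=\sum_{j=0}^{\infty}\frac{a}{(2j+1)k^{2j+1}}\,\partial_{2j+1}
= a\,\varepsilon(k)\cdot\partial_{\mathbf{t}},
\]
where $\partial_{\mathbf{t}}=(\partial_1,\partial_3,\partial_5,\cdots)$ and the dot denotes the componentwise pairing used earlier in the paper, e.g.\ in \eqref{D-exp-2}. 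So the operator is a first-order constant-coefficient differential operator with shift vector $\mathbf{s}=a\varepsilon(k)$.

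Next, the partial derivatives $\partial_{2j+1}$ have constant coefficients and commute pairwise. Hence the exponential factorizes:
\[
e^{\mathbf{s}\cdot\partial_{\mathbf{t}}}=\prod_{j\ge 0}e^{s_{2j+1}\partial_{2j+1}}.
\]
Each factor acts on a single variable. By the one-variable Taylor series,
\[
e^{s\partial_{t_i}}f=\sum_{n}\frac{s^n}{n!}\partial^n_{t_i}f=f(\ldots,t_i+s,\ldots).
\]
This is the same mechanism that gives the exponential form \eqref{D-exp} of the bilinear operator. Composing these shifts one coordinate at a time shifts $t_{2j+1}$ to $t_{2j+1}+\frac{a}{(2j+1)k^{2j+1}}$ for every $j$, which is exactly $f(\mathbf{t}+a\varepsilon(k))$.

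The main obstacle is rigor with infinitely many variables, since Taylor expansion in each variable needs $f$ analytic. I would handle this in the way the paper's setting allows. I would take $f$ to depend on only finitely many $t_{2j+1}$; by the remark after \eqref{Nss-tau}, the remaining times act as parameters, which amounts to treating the operator formally. Alternatively, I would verify the identity on exponentials $f=e^{\mathbf{q}\cdot\mathbf{t}}$, where
\[
e^{\mathbf{s}\cdot\partial_{\mathbf{t}}}e^{\mathbf{q}\cdot\mathbf{t}}=e^{\mathbf{s}\cdot\mathbf{q}}\,e^{\mathbf{q}\cdot\mathbf{t}}=e^{\mathbf{q}\cdot(\mathbf{t}+\mathbf{s})}.
\]
This covers every $\tau_N$ in \eqref{Nss-tau-sato}, which is all that is needed for Theorem~\ref{thm-2-6-1}. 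For the exponentials arising from $\zeta_i$, the sum $\mathbf{s}\cdot\mathbf{q}=\sum_j\frac{a}{2j+1}(k_i/k)^{2j+1}$ converges when $|k_i|<|k|$, and equals $\frac{a}{2}\ln\frac{k+k_i}{k-k_i}$. Elsewhere the identity is understood formally.
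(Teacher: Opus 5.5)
Your argument is correct and is the reasoning the paper leaves implicit when it calls the lemma obvious. You write $a\zeta(\t\partial,k^{-1})$ as the constant-coefficient operator $a\varepsilon(k)\cdot\partial_{\mathbf{t}}$ and apply the Taylor shift formula; your caveats on analyticity, and your check on exponentials (all that Lemma~\ref{lem-2-6-3} and Theorem~\ref{thm-2-6-1} use), make explicit what the paper takes for granted.
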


The result is obvious.

\begin{lem}\label{lem-2-6-2}
For $\zeta(\mathbf{t},k)$ defined in \eqref{X-kdv-b}, there is
\begin{equation}
e^{-4\zeta(\varepsilon(p), q)}=\Bigl(\frac{p-q}{p+q}\Bigr)^2.
\label{Apq}
\end{equation}
\end{lem}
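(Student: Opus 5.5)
We compute the exponent directly and recognise a logarithmic power series. By the definition \eqref{X-kdv-b}, substituting $\mathbf{t}=\varepsilon(p)$, i.e. $t_{2j+1}=\frac{1}{(2j+1)p^{2j+1}}$, gives
\[
\zeta(\varepsilon(p),q)=\sum_{j=0}^{\infty}\frac{q^{2j+1}}{(2j+1)p^{2j+1}}
=\sum_{j=0}^{\infty}\frac{1}{2j+1}\Bigl(\frac{q}{p}\Bigr)^{2j+1}.
\]
Set $z=q/p$. The key step is to recognise this series as the odd part of the logarithm. From $-\ln(1-z)=\sum_{n\geq1}z^n/n$ we have $\ln(1+z)=\sum_{n\ge1}(-1)^{n+1}z^n/n$. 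Subtracting the two expansions keeps only the odd powers, with a factor $2$:
\[
\sum_{j\geq 0}\frac{z^{2j+1}}{2j+1}=\frac12\ln\frac{1+z}{1-z}=\operatorname{artanh} z .
\]
This holds for $|z|<1$, i.e. $|q|<|p|$.

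From there the identity is immediate:
\[
-4\zeta(\varepsilon(p),q)=-2\ln\frac{1+z}{1-z}=2\ln\frac{1-z}{1+z}=2\ln\frac{p-q}{p+q}.
\]
Exponentiating gives $\bigl(\frac{p-q}{p+q}\bigr)^2$, as claimed in \eqref{Apq}. Working with the square also removes any sign ambiguity in the logarithm. For $|z|<1$ we have $\frac{1-z}{1+z}>0$ when $p,q$ are real, so the principal logarithm is legitimate anyway.

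The only genuine obstacle is convergence. The identity should be read either for $|q|<|p|$ or as a formal power series identity in $q/p$, which is the sense in which it will be used in Theorem \ref{thm-2-6-1}. There, $p$ and $q$ play the roles of $k_i$ and $k_j$, and the vertex operator calculus is formal in the expansion parameter. The resulting rational function $\bigl(\frac{p-q}{p+q}\bigr)^2$ then provides the analytic continuation to general $p\neq \pm q$. In the write-up I will state the convergence region $|q|<|p|$ explicitly and remark that outside it the equality holds by analytic continuation, or as an identity of formal series.
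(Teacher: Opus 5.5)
Your proof is correct and follows the paper's argument. The paper likewise writes $\ln\frac{p-q}{p+q}=\ln(1-q/p)-\ln(1+q/p)$, notes that only the odd powers survive (with a factor $2$), identifies the result as $-2\zeta(\varepsilon(p),q)$ under the formal assumption $p>q>0$, and then exponentiates. Your explicit convergence region $|q|<|p|$ and your remark that the identity can be read as a formal series in $q/p$ are just a more careful statement of that same step.
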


\begin{proof}
Through formal expansion (assuming $p>q>0$) we have
\begin{align*}
\ln \Bigl(\frac{p-q}{p+q}\Bigr)
& = \ln (1-q/p)-\ln (1+q/p)\\
& = -\sum_{j=1}^{\infty}\frac{q^j}{j\, p^j}- \sum_{j=1}^{\infty}(-1)^{j+1}\frac{q^j}{j\, p^j}\\
& = -2 \sum_{j=0}^{\infty}\frac{q^{2j+1}}{(2j+1)\, p^{2j+1}}\\
& = -2 \zeta(\varepsilon(p), q),
\end{align*}
which is equivalent to \eqref{Apq}.
\end{proof}

\begin{lem}\label{lem-2-6-3}
For $\zeta(\mathbf{t},k)$ and $\t \partial$ defined in \eqref{X-kdv} and $A_{ij}$ defined in \eqref{Nss-tau-b}, there is
\begin{equation}
 e^{-2\zeta(\t \partial,\, k^{-1}_i)} e^{2\zeta(\mathbf{t},k_j)}
 =A_{ij}  e^{2\zeta(\mathbf{t},k_j)} \, e^{-2\zeta(\t \partial,\, k^{-1}_i)}.
\label{switch}
\end{equation}
\end{lem}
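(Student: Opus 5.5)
The plan is to apply both sides of \eqref{switch} to an arbitrary smooth test function $f(\mathbf{t})$ and compare them, using Lemma \ref{lem-2-6-1} to evaluate the shift operator and Lemma \ref{lem-2-6-2} to identify the resulting constant.

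First consider the left-hand side. By Lemma \ref{lem-2-6-1} with $a=-2$ and $k=k_i$, the operator $e^{-2\zeta(\t \partial,\, k^{-1}_i)}$ is the shift $\mathbf{t}\mapsto \mathbf{t}-2\varepsilon(k_i)$. Applied to the product $e^{2\zeta(\mathbf{t},k_j)}f(\mathbf{t})$, it gives
\[
e^{2\zeta(\mathbf{t}-2\varepsilon(k_i),\,k_j)}\, f(\mathbf{t}-2\varepsilon(k_i)).
\]

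Next, since $\zeta(\mathbf{t},k)$ is linear in $\mathbf{t}$, the exponent splits as
\[
2\zeta(\mathbf{t},k_j)-4\zeta(\varepsilon(k_i),k_j).
\]
Lemma \ref{lem-2-6-2} with $p=k_i$ and $q=k_j$ then gives
\[
e^{-4\zeta(\varepsilon(k_i),k_j)}=\Bigl(\frac{k_i-k_j}{k_i+k_j}\Bigr)^2=A_{ij}.
\]
So the left-hand side equals $A_{ij}e^{2\zeta(\mathbf{t},k_j)}f(\mathbf{t}-2\varepsilon(k_i))$.

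For the right-hand side, Lemma \ref{lem-2-6-1} gives directly $A_{ij}e^{2\zeta(\mathbf{t},k_j)}f(\mathbf{t}-2\varepsilon(k_i))$. The two sides agree for every $f$, so \eqref{switch} holds as an operator identity.

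The only delicate point is convergence. The series $\zeta(\varepsilon(k_i),k_j)=\sum_j (k_j/k_i)^{2j+1}/(2j+1)$ converges only when $|k_j|<|k_i|$. Otherwise the identity is meant formally, as in the proof of Lemma \ref{lem-2-6-2}, or by analytic continuation in the ratio $k_j/k_i$. I would state this assumption explicitly rather than try to remove it.
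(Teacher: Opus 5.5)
Your proof is correct and follows the paper's argument. You apply the operator to an arbitrary $f(\mathbf{t})$, realize $e^{-2\zeta(\t\partial,\,k_i^{-1})}$ as the shift $\mathbf{t}\mapsto\mathbf{t}-2\varepsilon(k_i)$ by Lemma \ref{lem-2-6-1}, split the exponent using linearity of $\zeta$, and identify $e^{-4\zeta(\varepsilon(k_i),k_j)}=A_{ij}$ via Lemma \ref{lem-2-6-2}. Your convergence caveat ($|k_j|<|k_i|$, otherwise formal) matches the paper's own formal-expansion assumption in Lemma \ref{lem-2-6-2}, though in that series you should rename the summation index so it does not clash with the particle index $j$.
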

\begin{proof}
For any $C^\infty$ function $f(\mathbf{t})$, successively using Lemma \ref{lem-2-6-1} and Lemma \ref{lem-2-6-2}, we find
\begin{align*}
& e^{-2\zeta(\t \partial,\, k^{-1}_i)} e^{2\zeta(\mathbf{t},k_j)}  f(\mathbf{t})\\
=~& e^{2\zeta(\mathbf{t}-2\varepsilon(k_i) ,k_j)}  f(\mathbf{t}-2\varepsilon(k_i))\\
=~& e^{2\zeta(\mathbf{t},k_j)} e^{-4\zeta(\varepsilon(k_i) ,k_j)} e^{-2\zeta(\t \partial,\, k^{-1}_i)}  f(\mathbf{t})\\
=~& A_{ij}  e^{2\zeta(\mathbf{t},k_j)} \, e^{-2\zeta(\t \partial,\, k^{-1}_i)} f(\mathbf{t}).
\end{align*}
The lemma is proved due to the arbitrariness of $f(\mathbf{t})$.
\end{proof}

There is another expression for \eqref{switch}. If denoting  $A=-2\zeta(\t \partial,\, k^{-1}_i),~ B=2\zeta(\mathbf{t},k_j)$,
first we have
\begin{equation}
[A,B]=- 4\zeta(\varepsilon(k_i) ,k_j)=\ln  A_{ij},
\label{AB-BA}
\end{equation}
where $[A,B]=AB-BA$. In fact, because $B$ is a linear function, $[A,B]$ must be a scalar.
Noticing that
\[[\partial_{2r+1},\, t_{2s+1}]=\delta_{r,s},\]
where $\delta_{r,s}$ is the Kronecker $\delta$ function,
we have
\begin{align*}
[A,B]&=-4\sum^{\infty}_{r=0}\sum^{\infty}_{s=0}\frac{1}{2s+1}\frac{k_j^{2r+1}}{k_i^{2s+1}}[\partial_{2s+1},\, t_{2r+1}]\\
     &=-4 \sum^{\infty}_{r=0}\sum^{\infty}_{s=0}\frac{1}{2s+1}\frac{k_j^{2r+1}}{k_i^{2s+1}}\delta_{r,s}\\
     &=-4 \sum^{\infty}_{r=0}\frac{1}{2r+1}\frac{k_j^{2r+1}}{k_i^{2r+1}}=- 4\zeta(\varepsilon(k_i),k_j).
\end{align*}
Then, using \eqref{AB-BA}, the relation \eqref{switch} can be written as
\begin{equation}
e^Ae^B=e^{[A,B]}e^Be^A.
\end{equation}

\begin{lem}\label{lem-2-6-4}
For the vertex operator $X(k)$ defined by \eqref{X-kdv-a} and $A_{ij}$ defined in \eqref{Nss-tau-b}, there is
\begin{equation}
X(k_i)X(k_j)=A_{ij}e^{2(\zeta(\mathbf{t},k_i)+\zeta(\mathbf{t},k_j))} e^{-2\zeta(\t \partial,\, k^{-1}_i)}  e^{-2\zeta(\t \partial,\, k^{-1}_j)}.
\label{XXkij}
\end{equation}
\end{lem}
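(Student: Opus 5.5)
The plan is to expand the product $X(k_i)X(k_j)$ using the definition \eqref{X-kdv-a}, and then move the differential-shift factor of $X(k_i)$ past the multiplicative factor of $X(k_j)$ using Lemma \ref{lem-2-6-3}. Concretely, write
\[
X(k_i)X(k_j)=e^{2\zeta(\mathbf{t},k_i)}\,\Bigl[e^{-2\zeta(\t \partial,\, k^{-1}_i)}\, e^{2\zeta(\mathbf{t},k_j)}\Bigr]\, e^{-2\zeta(\t \partial,\, k^{-1}_j)} .
\]
Only the bracketed middle pair fails to be in the desired "all multiplications left, all shifts right" order.

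Lemma \ref{lem-2-6-3} replaces the bracket by $A_{ij}\,e^{2\zeta(\mathbf{t},k_j)}\,e^{-2\zeta(\t \partial,\, k^{-1}_i)}$, which is an identity of operators acting on arbitrary $C^\infty$ functions. Substituting it back, the scalar $A_{ij}$ commutes with everything and can be pulled to the front. The two multiplication operators $e^{2\zeta(\mathbf{t},k_i)}$ and $e^{2\zeta(\mathbf{t},k_j)}$ are multiplications by functions, so they combine into $e^{2(\zeta(\mathbf{t},k_i)+\zeta(\mathbf{t},k_j))}$. This gives exactly the right-hand side of \eqref{XXkij}. If desired, one can also note that the two shift operators commute, since by Lemma \ref{lem-2-6-1} they are translations by $-2\varepsilon(k_i)$ and $-2\varepsilon(k_j)$; this ordering is not needed but explains the symmetry.

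To be safe, I would verify everything by applying both sides to an arbitrary $f(\mathbf{t})$, exactly as in the proof of Lemma \ref{lem-2-6-3}. The left side is
\[
e^{2\zeta(\mathbf{t},k_i)}\,e^{2\zeta(\mathbf{t}-2\varepsilon(k_i),k_j)}\,f\bigl(\mathbf{t}-2\varepsilon(k_i)-2\varepsilon(k_j)\bigr).
\]
Since $\zeta$ is linear in $\mathbf{t}$, the middle factor splits off the constant $e^{-4\zeta(\varepsilon(k_i),k_j)}$, which equals $A_{ij}$ by Lemma \ref{lem-2-6-2}. What remains matches the right side acting on $f$.

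There is no real obstacle. The only delicate point is the formal convergence of $\zeta(\varepsilon(k_i),k_j)$, which requires $|k_j/k_i|<1$, or else interpreting the identity by analytic continuation of $A_{ij}$. This is already implicit in Lemma \ref{lem-2-6-2}, and I would simply note it.
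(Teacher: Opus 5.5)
Your proof is correct and is essentially the paper's argument: expand $X(k_i)X(k_j)$ and use Lemma \ref{lem-2-6-3} to move $e^{-2\zeta(\t\partial,\,k_i^{-1})}$ past $e^{2\zeta(\mathbf{t},k_j)}$, which produces the scalar $A_{ij}$; then collect the multiplicative factors. Your extra check on an arbitrary $f(\mathbf{t})$ and your convergence remark ($|k_j|<|k_i|$, as implicitly assumed in Lemma \ref{lem-2-6-2}) are accurate but not needed.
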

\begin{proof}
Using formula \eqref{switch}, we find
\begin{align*}
X(k_i)X(k_j)& = e^{2\zeta(\mathbf{t},k_i)} e^{-2\zeta(\t \partial,\, k^{-1}_i)} e^{2\zeta(\mathbf{t},k_j))} e^{-2\zeta(\t \partial,\, k^{-1}_j)}\\
            & = e^{2\zeta(\mathbf{t},k_i)} A_{ij}\,e^{2\zeta(\mathbf{t},k_j))} e^{-2\zeta(\t \partial,\, k^{-1}_i)} e^{-2\zeta(\t \partial,\, k^{-1}_j)},
\end{align*}
which is \eqref{XXkij}.
\end{proof}

As a  corollary of Lemma \ref{lem-2-6-4}, we have the following.

\begin{lem}\label{lem-2-6-5}
For the vertex operator $X(k)$ defined by \eqref{X-kdv-a}, there are
\begin{align}
& (X(k))^2=0,\\
& e^{cX(k)}=1+cX(k),
\end{align}
\begin{align}
& X(k_s)\cdots X(k_2) X(k_1)\nonumber\\
=& \Biggl(\prod_{1\leq i<j}^s A_{ij}\Biggr)
\mathrm{exp}\Biggl(2\sum_{j=1}^s \zeta(\mathbf{t},k_j)\Biggr)
\mathrm{exp}\Biggl(2\sum_{j=1}^s \zeta(\t \partial,k_j^{-1})\Biggr);
\end{align}
and
\begin{align}
& X(k)\ccirc 1=e^{2\zeta(\mathbf{t},k)},\\
& X(k_s)\cdots X(k_2) X(k_1)\ccirc 1
= \Biggl(\prod_{1\leq i<j}^s A_{ij}\Biggr)
\mathrm{exp}\Biggl(2\sum_{j=1}^s \zeta(\mathbf{t},k_j)\Biggr),
\end{align}
where ``$\,{\ccirc} 1$" means an operator acting on $`` 1"$,
and $A_{ij}$ is defined in \eqref{Nss-tau-b}.
\end{lem}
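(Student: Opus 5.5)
Each claim of Lemma \ref{lem-2-6-5} follows directly from Lemma \ref{lem-2-6-4} together with the shift formula of Lemma \ref{lem-2-6-1}, so the plan is to read them off in order.

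\emph{First, $(X(k))^2=0$.} Set $k_i=k_j=k$ in \eqref{XXkij}. The prefactor becomes $A_{ii}=\bigl(\frac{k-k}{k+k}\bigr)^2=0$, so $X(k)^2=0$. Since the commutator identity \eqref{AB-BA} was derived from a formal series valid only for $|k_j|<|k_i|$, I would justify the case $k_i=k_j$ by analytic continuation. Alternatively, one can check it directly. For any $f$, Lemma \ref{lem-2-6-1} gives
\[
X(k)X(k)f=e^{2\zeta(\mathbf{t},k)}\,e^{2\zeta(\mathbf{t}-2\varepsilon(k),k)}\,f(\mathbf{t}-4\varepsilon(k)),
\]
and $e^{-4\zeta(\varepsilon(k),k)}$ is formally $\exp\bigl(2\ln\tfrac{k-k}{k+k}\bigr)=0$. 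I expect this degenerate evaluation to be the only delicate point, and I would state it as a limit $k_j\to k_i$ of \eqref{XXkij}. Once $X(k)^2=0$ holds, the exponential series truncates, giving $e^{cX(k)}=1+cX(k)$.

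\emph{Second, the $s$-fold product formula.} I would prove it by induction on $s$, applying \eqref{switch} repeatedly. Each $e^{-2\zeta(\t\partial,k_i^{-1})}$ must be moved to the right past every factor $e^{2\zeta(\mathbf{t},k_j)}$ with $j\neq i$ lying to its right, and each such crossing produces one factor $A_{ij}$. Every unordered pair $\{i,j\}$ is crossed exactly once, so the total prefactor is $\prod_{i<j}A_{ij}$, using the symmetry $A_{ij}=A_{ji}$. All the multiplication operators commute with one another, and so do all the shift operators; this lets them combine into single exponentials of the sums. I would also note that the sign in the displayed shift exponential should be read as $-2\zeta(\t\partial,k_j^{-1})$, consistent with \eqref{X-kdv-a}.

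\emph{Third, the action on $1$.} By Lemma \ref{lem-2-6-1}, each shift operator acts as the identity on the constant function $1$. Hence $X(k)\ccirc 1=e^{2\zeta(\mathbf{t},k)}$, and applying the product formula to $1$ gives the last identity immediately.
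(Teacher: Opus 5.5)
Your proposal is correct and follows the paper's route. The paper records this lemma as an immediate corollary of Lemma~\ref{lem-2-6-4}, using the switch relation \eqref{switch} and the shift formula of Lemma~\ref{lem-2-6-1}; your induction, your limit argument giving $A_{ii}=0$ at $k_i=k_j$, and the evaluation on $1$ simply spell out those steps. You are also right that the shift exponential in the displayed $s$-fold product should read $-2\zeta(\t\partial,k_j^{-1})$, consistent with \eqref{X-kdv-a} and \eqref{XXkij}.
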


Making use of Lemma \ref{lem-2-6-5}, it is not difficult to calculate:
\begin{align*}
\tau_1& =e^{c_1 X(k_1)}\ccirc 1=(1+c_1 X(k_1))\ccirc 1=1+c_1 e^{2\zeta(\mathbf{t},k_1)},\\
\tau_2& =e^{c_2 X(k_2)}e^{c_1 X(k_1)}\ccirc 1=(1+c_2 X(k_2))(1+c_1 X(k_1))\ccirc 1\\
      & =1+c_1 e^{2\zeta(\mathbf{t},k_1)}+c_2 e^{2\zeta(\mathbf{t},k_2)}+c_1c_2 A_{12} e^{2(\zeta(\mathbf{t},k_1)+\zeta(\mathbf{t},k_2))},\\
\tau_N& =e^{c_N X(k_N)}\cdots e^{c_2 X(k_2)} e^{c_1 X(k_1)}\ccirc 1\\
      & =(1+c_N X(k_N))\cdots (1+c_2 X(k_2))(1+c_1 X(k_1))\ccirc 1\\
      & =e^{c_N X(k_N)} \tau_{N-1}.
\end{align*}
Thus we also finish the proof for Theorem \ref{thm-2-6-1}.

\subsection{Vertex operator of the KP(II) equation}\label{sec-5-2}

Similar to \eqref{kdv-bil-D-tau-13}
we can rewrite the bilinear KP(II) equation \eqref{kp-bil-D} as the following,
\begin{equation}
(4D_1D_3 -D^4_1-3D^2_2) \tau\cdot \tau=0.
\label{kp-bil-D-tau-13}
\end{equation}
Its NSS \eqref{Nss-f-kp} is presented as
\begin{subequations}\label{Nss-tau-sato-kp}
\begin{equation}
\tau_N=\sum_{J\subset I}\Biggl[ \Biggl(\prod_{i\in J}c_i\Biggr)\Biggl(\prod_{\begin{smallmatrix}i,j\in J\\i<j\end{smallmatrix}}A_{ij}\Biggr)
\mathrm{exp}\Biggl(\sum_{i\in J} \xi_i\Biggr)\Biggr],
\label{Nss-tau-sato-kp-a}
\end{equation}
where $c_i\in \mathbb{R}$,
\begin{equation}
\xi_j=\sum_{i=0}^{\infty} (p_j^i-q_j^i)t_i, ~~ e^{a_{ij}}=A_{ij}=\frac{(p_i-p_j)(q_i-q_j)}{(p_i-q_j)(q_i-p_j)},
\label{Nss-tau-sato-kp-b}
\end{equation}
\end{subequations}
$I$ stands for the set $I=\{1,2,\cdots,N\}$, $J$ is a subset of $I$,
and summation over $J\subset I$ means taking all possible subsets of $I$.

For the vertex operator related to the KP(II) equation, we have the following \cite{DatKM-1981,MiwJD-2000}.

\begin{theorem}\label{thm-2-6-2}
For the $\tau$ function defined by \eqref{Nss-tau-sato-kp}, there is
\begin{equation}
\tau_{N+1}=e^{c_{N+1}X(p_{N+1},q_{N+1})} \tau_N,
\label{tau-X-kp}
\end{equation}
where
\begin{subequations}\label{X-kp}
\begin{align}
& X(p,q)=e^{\xi(\mathbf{t},p)-\xi(\mathbf{t},q)} \, e^{-(\xi(\t \partial,\, p^{-1})-\xi(\t \partial,\, q^{-1})},\label{X-kp-a}\\
& \xi(\mathbf{t},k)=\sum^{\infty}_{j=0}k^{j} t_{j},~~~ \mathbf{t}=(t_1, t_2, t_3, \cdots),\label{X-kp-b}\\
& \t{\partial}=\Bigl(\partial_{1},\, \frac{\partial_{2}}{2},\, \frac{\partial_{3}}{3},\, \cdots \Bigr),~~~ \partial_{j}=\partial_{t_j}.
\end{align}
\end{subequations}
\end{theorem}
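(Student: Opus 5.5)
The plan is to mirror the proof of Theorem \ref{thm-2-6-1} step by step, replacing the odd-time shifts by the full KP shift vector
\[
\varepsilon(k)=\Bigl(\frac1k,\frac1{2k^2},\frac1{3k^3},\cdots\Bigr).
\]
The analogue of Lemma \ref{lem-2-6-1} is immediate:
\[
e^{a\xi(\t\partial,k^{-1})}f(\mathbf t)=f(\mathbf t+a\varepsilon(k)).
\]
Hence the second factor of $X(p,q)$ acts as the shift
\[
f(\mathbf t)\mapsto f\bigl(\mathbf t-\varepsilon(p)+\varepsilon(q)\bigr).
\]

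The key computation is the analogue of Lemma \ref{lem-2-6-2}. Formally expanding for $|k|>|\ell|$ gives
\[
\xi(\varepsilon(k),\ell)=\sum_{j\ge1}\frac{\ell^j}{jk^j}=-\ln\Bigl(1-\frac{\ell}{k}\Bigr),
\]
with the $j=0$ term ignored, since $t_0$ plays no role. Therefore
\[
e^{\xi(\mathbf t-\varepsilon(p_i)+\varepsilon(q_i),\,p_j)-\xi(\mathbf t-\varepsilon(p_i)+\varepsilon(q_i),\,q_j)}
=e^{\xi(\mathbf t,p_j)-\xi(\mathbf t,q_j)}\,
\frac{(1-p_j/p_i)(1-q_j/q_i)}{(1-p_j/q_i)(1-q_j/p_i)}.
\]
The constant factor simplifies to
\[
\frac{(p_i-p_j)(q_i-q_j)}{(q_i-p_j)(p_i-q_j)}=A_{ij}.
\]
This is the commutation rule analogous to \eqref{switch}:
\[
e^{-(\xi(\t\partial,p_i^{-1})-\xi(\t\partial,q_i^{-1}))}\,e^{\xi(\mathbf t,p_j)-\xi(\mathbf t,q_j)}
=A_{ij}\,e^{\xi(\mathbf t,p_j)-\xi(\mathbf t,q_j)}\,e^{-(\xi(\t\partial,p_i^{-1})-\xi(\t\partial,q_i^{-1}))}.
\]
Equivalently, it is $e^Ae^B=e^{[A,B]}e^Be^A$ with $[A,B]=\ln A_{ij}$, computed from $[\partial_r,t_s]=\delta_{rs}$.

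From this I obtain the analogue of Lemma \ref{lem-2-6-4}:
\[
X(p_i,q_i)X(p_j,q_j)=A_{ij}\,e^{\xi_i+\xi_j}\,(\text{shift operators}),
\]
where $\xi_i=\xi(\mathbf t,p_i)-\xi(\mathbf t,q_i)$. Taking $i=j$ gives $A_{ii}=0$, because of the factor $(p_i-p_i)$. Hence $X(p,q)^2=0$ and $e^{cX(p,q)}=1+cX(p,q)$. Iterating,
\[
X(p_{j_s},q_{j_s})\cdots X(p_{j_1},q_{j_1})\ccirc1=\Bigl(\prod_{a<b}A_{j_aj_b}\Bigr)e^{\sum_a\xi_{j_a}},
\]
since the shift operators act trivially on $1$. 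The $A_{ij}$ are symmetric in $i,j$, so the ordering of the product does not matter.

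Finally, I expand
\[
\tau_N=\prod_{i=N}^{1}\bigl(1+c_iX(p_i,q_i)\bigr)\ccirc1
\]
as a sum over subsets $J\subset I$, which reproduces \eqref{Nss-tau-sato-kp-a}. Then
\[
\tau_{N+1}=(1+c_{N+1}X(p_{N+1},q_{N+1}))\,\tau_N=e^{c_{N+1}X(p_{N+1},q_{N+1})}\tau_N.
\]

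The main obstacle is the analogue of Lemma \ref{lem-2-6-2}, namely getting the signs and the pairing of factors right so that exactly $A_{ij}$ emerges. A secondary point is justifying the formal logarithm expansion; I would handle it as in the paper, by assuming a suitable ordering of $|p|,|q|$ and then continuing analytically. I would also check that the convention $\xi=\xi(\mathbf t,p)-\xi(\mathbf t,q)$ matches $\xi_j$ in \eqref{Nss-tau-sato-kp-b}. It does: the $i=0$ term there, $(p_j^0-q_j^0)t_0$, vanishes.
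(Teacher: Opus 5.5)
Your proposal is correct and takes the paper's route. The paper proves Theorem~\ref{thm-2-6-2} only by analogy with Sec.~\ref{sec-5-1}: it asserts $X(p_i,q_i)X(p_j,q_j)=A_{ij}\,{:}X(p_i,q_i)X(p_j,q_j){:}$ and then the analogues of Lemma~\ref{lem-2-6-5} ($X^2=0$, $e^{cX}=1+cX$, and the action on $1$), which is exactly your sequence of steps. You additionally work out, via the shift $\varepsilon(k)=(1/k,1/(2k^2),\dots)$, the computation the paper leaves implicit: that the constant is exactly $A_{ij}=A_{ji}$. For $X(p,q)^2=0$ you rely on the same formal-expansion and analytic-continuation caveat that the paper uses for $X(k)^2=0$ in the KdV case.
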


\vskip 5pt

In fact, (similar to Sec.\ref{sec-5-1}) one can prove that
\begin{equation}
X(p_i,q_i)X(p_j,q_j)=A_{ij}\,:\!\!X(p_i,q_i)X(p_j,q_j)\!\!:,
\label{XXpqij}
\end{equation}
where   $A_{ij}$ is defined in \eqref{Nss-tau-sato-kp-b},
\[:\!\!X(p_i,q_i)X(p_j,q_j)\!\!:\,=e^{\xi(\mathbf{t},p_i)-\xi(\mathbf{t},q_i)} e^{\xi(\mathbf{t},p_j)-\xi(\mathbf{t},q_j)}
\, e^{-(\xi(\t \partial,\, p^{-1}_i)-\xi(\t \partial,\, q^{-1}_i))}
\, e^{-(\xi(\t \partial,\, p^{-1}_j)-\xi(\t \partial,\, q^{-1}_j))}\]
is the normally arranged product of $X(p_i,q_i)X(p_j,q_j)$, (just moving all differential operators to the most right).

Then we can have the following results which are similar to Lemma \ref{lem-2-6-5}:
\begin{align}
& (X(p,q))^2=0,\\
& e^{cX(p,q)}=1+cX(p,q),\\
& X(p_s,q_s)\cdots X(p_2,q_2) X(p_1,q_1)
=   \Biggl(\prod_{1\leq i<j}^s A_{ij}\Biggr)
:\!\!X(p_s,q_s)\cdots X(p_2,q_2) X(p_1,q_1)\!\! : \, ;
\end{align}
and
\begin{align}
& X(p,q)\ccirc 1=e^{\xi(\mathbf{t},p)-\xi(\mathbf{t},q)},\\
& X(p_s,q_s)\cdots X(p_2,q_2) X(p_1,q_1) \ccirc 1
=  \Biggl(\prod_{1\leq i<j}^s A_{ij}\Biggr)
\mathrm{exp}\Biggl(\sum_{j=1}^s (\xi(\mathbf{t},p_j)-\xi(\mathbf{t},q_j))\Biggr).
\end{align}
Then it is not difficult to get the results in Theorem \ref{thm-2-6-2}.

\section{Concluding remarks}\label{sec-6}

In this partial review of the bilinear method, we focused on two topics of the KdV-type bilinear equations:
the integrability  based on the 3SS condition
and the transformations between $\tau$ functions via BTs and vertex operators.
The existence of BTs also indicates integrability,
while the vertex operators reveal algebraic aspects of $\tau$ functions and the related integrable equations.

Physically, the elastic scattering property for arbitrary number of solitons
implies integrability.
For the KdV-type bilinear equation \eqref{kdvtype-bil}, which always admits a 1SS and a 2SS,
if it is integrable, then we can consider the 2SS as a result of a 3SS after removing one soliton
under the elastic scattering property.
It turns out that such a 3SS must take Hirota's form \eqref{kdv-type-Nss} for $N=3$.
Repeating this, one can increase the number of solitons  step by step, and finally recover the NSS \eqref{kdv-type-Nss}.
Along the above narration, it can be understood the existence of a 3SS
does mean something to the integrability of a KdV-type bilinear equation.
In Hietarinta's search for integrable bilinear equations,
except for the KdV-type \cite{Hie-1987a},
the mKdV-type and sG-type bilinear equations in his consideration
also admit 1SSs and   2SSs \cite{Hie-1987b,Hie-1987c}.
So, the existence of a 3SS might imply integrability for these types of bilinear equations.
For the NLS-type and BO-type bilinear equations considered in \cite{Hie-1988},
only 1SSs automatically exist, which means  the existence of a 2SS
may imply integrability.
However, strictly speaking, the 3SS condition (2SS for some bilinear equations, e.g.  the NLS-type)
is necessary but not universally sufficient for integrability for a general bilinear equation or a system,
and  additional tests (e.g. the Lax pair or Painlev\'e property) are often required.
Note also that recently the author and collaborators developed a procedure to
nonlinearize bilinear equations by using the connections between the Bell polynomials and
Hirota's bilinear derivatives \cite{ZLZ-CTP-2025,LZZZ-2026}.
This may provide a solution to the question raised by Hietarinta in \cite{Hie-1990}:
finding an algorithmic way to nonlinearize bilinear equations.

The BTs, on one side,
indicate a way to generate new solutions and Lax pairs
as well as integrability.
To construct a bilinear BT, except for by means of Hirota's technique as what we have shown in Sec.\ref{sec-4},
one can also employ the connections between the Bell polynomials and
Hirota's bilinear derivatives. Examples can be found in \cite{LamS-2006}.
On the other side, BTs provide a way to integrable discretization, which was  observed
by Levi and Benguria in 1980 \cite{LevB-1980}.
For the KdV equation, the nonlinear BT \eqref{kdv-BT-non-1-aa-1}
is nothing but the differential-difference potential KdV equation, cf.\cite{HieJN-2016},
and its bilinear form \eqref{kdv-BT-bil-x} is considered as a bilinear
differential-difference potential KdV equation.

For the vertex operators,
we only focus on the way they provide transformations for $\tau$ functions.
It is well known that once a vertex operator and its corresponding $\tau$ function are secured,
one can have a bilinear identity and then a hierarchy of bilinear equations
that have the $\tau$ function as their solution.
A recent progress is a building of the bilinear framework for elliptic solitons \cite{LZ-JNS-2022,LZ-CMAMS-2024},
of which the dispersion relations are non-polynomial type.
These solutions are obtained by replacing the classical exponential function by the Lam\'e function,
which describe solitons on an elliptic background.
Researchers are trying to understand more about elliptic solutions from Sato's theory for integrable systems,
e.g. \cite{Kak-arxiv-2023,Nak-2024,Nak-2025}.

\subsection*{Acknowledgements}

The author is very grateful to Jarmo Hietarinta for his guidance and collaboration
in the bilinear method and in the research of discrete integrable systems.
Sincere thanks are  extended to Prof. Ralph Willox for discussion,
to Prof. Yasuhiro Ohta for pointing out \cite{Hir-1977a}
as the reference to the formula \eqref{id-gen-1},
to Prof. Kenichi Maruno for sharing \cite{Hir-2004},
to Prof.  Daisuke Takahashi for sharing \cite{Tak-2016},
and to Prof. Yuji Kodama for sharing stories about the bilinear method and Sato's KP theory.
This review is completed partially based on the lecture notes the author gave in the University of Sydney in 2017.
He sincerely thank  Prof. Nalini Joshi for her hospitality during his visit in Sydney.
This project is supported by the NSFC grant (Nos. 1241540016, 12271334).

\label{lastpage}
\end{document}